\newtheorem{theorem}{Theorem}
\newtheorem{lemma}{Lemma}
\newtheorem{corollary}{Corollary}
\newtheorem{remark}{Remark}
\date{xx}
\journal{\copyright manuscript}
\newcommand{\tr}{\operatorname{tr}}
\newcommand{\I}{\mathcal{I}}
\begin{document}
\begin{frontmatter}

\title{Essential formulae for restricted maximum likelihood and its derivatives associated with the linear mixed models\tnoteref{fund}}

\author{Shengxin Zhu \fnref{xjtlu}}
\ead{Shengxin.Zhu@xjtlu.edu.cn}
\author{Andrew J Wathen \fnref{ox}}


\date{}
\address[xjtlu]{Department of Mathematics, Xi'an Jiaotong-Liverpool University}

\address[ox]{Numerical Analysis Group, Mathematical Institute, University of Oxford}
\tnotetext[fund]{ The research
was partially sponsored by the Engineering and
Physical Sciences Research Council (EPSRC, industry mathematics
knowledge transfer project, IM1000852) and is supported by National Natural Science of China (NSFC)(No.11501044), Jiangsu Science \& Technology general program(BK20171237), and partially supported by (NSFC No. 11571002, 11571047, 1161049, 11671051, 61672003) }

\begin{abstract}
The restricted maximum likelihood method enhances popularity of maximum likelihood methods for variance component analysis on large scale unbalanced data.
As the high throughput biological data sets and the emerged science on uncertainty quantification, such a method receives increasing attention.
Estimating the unknown variance parameters with restricted maximum likelihood method usually requires an nonlinear iterative method. Therefore proper formulae for the log-likelihood function and its derivatives play an essential role in practical algorithm design. It is our aim to provide a mathematical introduction to this method, and supply a self-contained derivation on some available formulae used in practical algorithms. Some new proof are supplied.
\end{abstract}
\begin{keyword}
Observed information matrix \sep
Fisher information matrix \sep Newton method,
linear mixed model \sep
variance parameter estimation.
\MSC[2010] 65J12 \sep 62P10 \sep 65C60 \sep 65F99
\end{keyword}

\end{frontmatter}
 \newcommand{\var}{\operatorname{var}}

\section{Introduction}
Recent advance in genome-wide association study involves large scale linear mixed models \cite{lip11,lis12,Yu06,zhang10,zhou12}. 
Quantifying random effects in term of (co-)variance parameters in the linear mixed model is receiving increasing attention\cite{T08}. Common
random effects are blocks in experiments or observational studies that
are replicated across space or time \cite{Fisher1935,QK02}. Other random
effects like variation among individuals, genotypes and species also appear
frequently. In fact, geneticists and evolutionary biologists have long began
to notice the importance of quantifying magnitude of variation among
genotypes and spices due to environmental factors \cite{Fisher1930,Hend59,MH08},
Ecologist recently are interested in the importance of random
variation in space and time, or among individual in the study of
population dynamics \cite{Bolker2008,PS03}. Similar problems also arises in estimating parameter
in high dimensional Gaussian distribution \cite{Daniel12}, functional
data analysis \cite{Bart14}, model selection analysis \cite{MSW13,YMO14} and many other applications \cite{RK88}.

Quantifying such random effects and making a statistical inference requires estimates of the co-variance parameters in the underlying model. The estimates are
usually obtained by maximizing a log-likelihood function which
often involves nonlinearly log-determinant terms. The first derivative of the log-likelihood is often referred to as a \emph{score function}. To maximize the log-likelihood, one requires to find the zeros of the score functions according to the conceptually simple Newton Method.  However, the negative Jacobian matrix of score function, which is often referred to as the \emph{observed information matrix} is very complicated (see \cite[p.825, eq.8]{zhou12},\cite[p.26, eq 11]{meyer96} and $\I_{O}(\theta_i,\theta_j)$ in Table 1).
A remedy is the Fisher's scoring algorithm which uses the \emph{Fisher information matrix}
 in stead of the observed matrix \cite{jenn76}\cite{Longford1987}. The Fisher information matrix is simper than the Jacobian matrix but still involves a trace term of four matrix-matrix product(see $\I(\theta_i,\theta_j)$ in Table 1). Such a trace term is computationally prohibitive for large data sets.
 For variance matrices which linearly depend on the underlying variance parameters, \cite{john95} introduced the \emph{average information} matrix
 (see Table 1 $ \I_A(\theta_i,\theta_j)$) which only involves a quadratic form. It can be efficiently computed by matrix vector multiplications. Such an average information matrix serves one of the basis of the averaged information restricted maximum likelihood algorithm \cite{GTC95,M97}. For general variance matrix, this \emph{average information matrix} is the main part of the exact average of the observed and Fisher information matrices, the negligible part which involves a lot of computation is a random zero matrix \cite{Z16,ZGL16}. After some matrix transform, the "average information" can be computed by solving a sparse linear system with multiple right hand sides. Together with an efficient sparse factorization algorithm, it enables the derivatives methods work for high throughput biological data sets \cite{WZW13}\cite{Z17}. 

Derivative free \cite{gras87} methods have been studied. They require less computational time per iteration, but they converge slow and require more iterates, especially for large scale problems \cite{misz93}. Comparisons in \cite{misz94b} shows that the derivative approach requires less time for most cases. That is why recent large scale genome wide association applications  \cite{lip11,lis12,zhang10,zhou12} and robust software development prefer the derivative approach. In this paper, we focus on essential formulae used in an derivative method.

This papers aims to provide a self-contained derivation on these essential formulae used in practical algorithms. Most of the formulae are available in publications in statistics \cite{H77}, animal breeding \cite{gras87}\cite{Mey89} and quantitative genetics\cite{meyer96}\cite{M97}. These publications are written in a statistician perspective and omitted some necessary brief proof which prevent general algorithm designers to follow. Even some recent algorithms still employ an out-of-date formula \cite[p.825,eq.8]{zhou12}.
One of our aims is to fill such a gap. Therefore, we focus on brief mathematical derivation on this formulae. For more statistical introduction to the linear mixed model and applications, the reader is directed to the review articles \cite{S88}, its application in quantitative genetics \cite{T08} and the classical book by Searle et al \cite[Chapter 6.6]{Searle06}. Most of the proof supplied here are new and derived independently. We have tried our best to attribute these results to other existing results, if any.
Since there are voluminous publications on linear mixed models, we apology if there are some relevant work we haven't noticed yet. The derivation on the derivatives are simper and and brief than previous derivations in \cite{W94}, this may shed light on general variance parameter estimation scheme for the Gaussian process \cite{W94}\cite{wood11}. The derivations shows that evaluating the restricted log-likelihood function and its approximate second derivatives is closely related to efficient sparse factorization techniques. These formulae together with efficient sparse matrix techniques enables derivative maximum likelihood methods to work on large scale biological data set \cite{WZW13}\cite{Z17}.

The reminder of the paper is organised as follows. In Section 2, we shall introduce the variance parameter estimation problem associated to the linear mixed model. Followed by an existence theorem on choosing an \emph{error contrast} transform to derive the restricted maximum likelihood and three equivalent formulae for the restricted log-likelihood in Section 3. In section 4, we provide derivations on the first derivatives and seconde derivatives. Section 5 discuss computing issues. The paper is concluded with some discussion and remarks.



\begin{table*}[t!]
\centering
 \caption{Elements of the observed information($\I_A$), Fisher information($\I$) and averaged information splitting matrix($\I_A)$}
 \label{tab:splitting}
\begin{tabular}{ll}
\hline
$\I_O(\theta_i, \theta_j) $ &  $ \frac{1}{2}\left\{\tr(P\dot{V}_{ij})-
\tr(P\dot{V}_iP \dot{V}_j) +2y^TP\dot{V}_iP\dot{V}_j Py -y^T
 P\ddot{V}_{ij}Py\right\} $\\
$\I(\theta_i, \theta_j)$ & $  \frac{1}{2} \tr(P\dot{V}_iP\dot{V}_j) $\\
$\I_A(\theta_i, \theta_j)$ & $\frac{1}{2} y^TP\dot{V}_i P \dot{V}_j P y$ \\
 & $P= V^{-1} - V^{-1}X(X^TV^{-1}X)^{-1}X^T V $ \\
 & $\dot{V}_i = \frac{\partial V}{\partial \theta_j}$, $\ddot{V}_{ij}=\frac{\partial^2 V}{ \partial \theta_i \partial \theta_j}$ \\
  \hline
\end{tabular}
\end{table*}

\section{Preliminary}

The basic model we considered is the widely used Linear Mixed Model(LMM),
\begin{equation}
y=X\tau+Zu+e. \label{eq:LMM}
\end{equation}
In the model, $y\in \mathbb{R}^{    n\times 1}$ is a vector of observable
measurements
 $\tau\in \mathbb{R}^{p\times 1}$ is a vector of fixed
effects, $X \in \mathbb{R}^{n\times p}$ is a \textit{design
matrix} which corresponds to the fixed effects, $u \in
\mathbb{R}^{b\times 1}$ is a vector of random effects, $Z \in
\mathbb{R}^{n\times b}$ is a design matrix which corresponds
to combination of random effects.
$e\in \mathbb{R}^{n\times 1}$ is the vector of residual errors.
The linear mixed model is an extension to the linear model
\begin{equation}
y=X\tau +e.    \label{eq:LM}
\end{equation}
LMM allows additional random components, $u$, as
correlated error terms, the linear mixed model is also referred to as \textit{linear mixed-effects models}. The term(s) $u$ can be
added level by level, therefore it is also referred to as \textit{hierarchal models}. It brings a wider range of
\textit{variance structures and models} than the linear model in
\eqref{eq:LM} does. For instance, in most cases, we suppose that
the random effects, $u$, and the residual errors, $e$, are
multivariate normal distributions such that $E(u)=0$, $E(e)=0$,
$u\sim N(0, \sigma^2 G)$, $e\sim N(0, \sigma^2 R)$ and
\begin{equation}
\text{var}\left[\begin{array}{c}
u\\
e
\end{array}\right]=\sigma^{2}\left[\begin{array}{cc}
G(\gamma) & 0\\
0 & R(\phi)
\end{array}\right],
\end{equation}
where $G\in \mathbb{R}^{b\times b }$, $R \in \mathbb{R}^{n\times
n}$. We shall denote $\kappa=(\gamma; \phi)^T.$
Under these
assumptions, we have
\begin{align}
y \vert  u &\sim N(X\tau+Zu, \sigma^2 R), \\
y &\sim N(X\tau, \sigma^2(R+ZGZ^T)):= N(X\tau, V(\theta)),
\end{align}
where $\theta=(\sigma^2;\kappa)^T$.
When the co-variance matrices $G$ and $R$ are known, one can obtain the \emph{Best Linear Unbiased Estimators} (BLUEs), $\hat{\tau}$, for the fixed effects and the \emph{Best Linear Unbiased Prediction} (BLUP), $\tilde{u}$, for the random effects according to the maximum likelihood method, or the Gauss-Markov-Aitiken least square \cite[\S 4.2]{Rao}. $\hat{\tau}$ and $\tilde{u}$ satisfy the following mixed model equation \cite{Hend59}
\begin{equation}
\begin{pmatrix}
X^TR^{-1}X  & X^T R^{-1}Z \\
Z^TR^{-1}X  & Z^TR^{-1}Z+G^{-1}
\end{pmatrix}\begin{pmatrix}
\hat{\tau}  \\ \tilde{u}
\end{pmatrix}
=\begin{pmatrix}
X^TR^{-1}y \\ Z^T R^{-1}y
\end{pmatrix}.
\label{eq:mme}
\end{equation}
For such a forward problem, confidence or uncertainty of the estimations of the fixed and random effects can be quantified in term of co-variance of the estimators and the predictors
\begin{equation}
\mathrm{var} \begin{pmatrix}
\hat{\tau}-\tau \\
\tilde{u} -u
\end{pmatrix}
 =\sigma^2 C^{-1},
\label{eq:pre}
\end{equation}
where $C$ is the coefficient matrix in the mixed model equation \eqref{eq:mme}.

In many other more realistic and interesting cases. The variance parameter $\theta$ is unknown and to be estimated. This paper focuses on these cases.
One of the commonly used methods to estimate variance parameters is
the maximum likelihood principle. In this approach, one starts with the distribution of the random vector $y$.
The variance of $y$ in the linear mixed model \eqref{eq:LMM} is
\begin{equation}
V = \operatorname{var}(y) 
    =\sigma^{2}(R+ZGZ^{T}):=\sigma^2H(\kappa)=V(\theta),
\end{equation}
and the likelihood function of $y$ is
\begin{equation}
L(\tau,\theta)=
\prod_{i=1}^n (2\pi)^{-\frac{n}{2}} |V(\theta)|^{-\frac{1}{2}}
\exp\left\{-\frac{1}{2}(y-X\tau)^TV(\theta)^{-1}(y-X\tau) \right\}.
\end{equation}
Since the logarithmic transformation is monotonic, it is equivalent to maximize
$\log L(\tau, \theta)$ instead of $L(\tau,\sigma^2)$.  The log-likelihood function is \cite[Chapter 6.2, eq.(13)]{Searle06}
\begin{equation}
\log L(\tau,\theta)=-\frac{1}{2}\left\{n\ln(2\pi)+\ln|V(\theta)| +(y-X\tau)^TV(\theta)(y-X\tau) \right\}.
\end{equation}
A maximum likelihood estimates for the variance parameter $\theta$ is
$$
\hat{\theta}=\arg_{\theta}\max \log L(\tau,\theta).
$$
The maximum likelihood estimate, $\hat{\sigma}^2$, for the variance parameter is asymptotically approaching to the true value,  $\sigma^2$, however, the
bias is relative large for finite observations with relative many effective fixed effects. Precisely
$$
\mathrm{Bias}(\hat{\sigma}^2,\sigma^2)=\frac{\nu}{n} \sigma^2,
$$
where $\nu=\mathrm{rank}(X)$. A remedy to remove or at least reduce such a bias is the Restricted Maximum Likelihood (REML) \cite{PT71}, which is also referred to as the \textit{marginal maximum likelihood} method or \textit{REsidual
Maximum Likelihood} method.
In original derivation of REML, the observation $y$ is transformed into two statically independent parts, $Sy$ and $Qy$ such that $\mathrm{cov}(Sy,Qy)=0$, where $S=I-X(X^TX)^{-1}X$ is a projection matrix with rank $\nu=\mathrm{rank}(X)$ such that $E(Sy)=0$, and $Q=X^TV^{-1}$ is a weighted project matrix with rank $\nu$. The likelihood for $Sy$ does not involves the information related to the fixed effects. Such a transform $Sy$ such that $E(Sy)=0$ is referred to as the \emph{error contrast} \cite{Harville}. A simpler error contrast to derive REML was suggested in \cite{Verbyla1990}:
\newcommand{\rank}{\operatorname{rank}}
for any $X\in
\mathbb{R}^{n\times p}$, choose a linear transformation
$L=[L_1, L_2]$, such that $L_1^TX=I_p$ and $L_2^TX=0$ (
Theorem \ref{thm:L} provides an independent proof on how to choose such an error contrast transformation).
\begin{equation}
L^Ty=\begin{pmatrix}
L_1^Ty \\ L_2^Ty
\end{pmatrix}
\sim
N\left( \begin{pmatrix}
\tau \\0
\end{pmatrix},
\begin{pmatrix}
L_1^TVL_1 & L_1^TVL_2 \\
L_2^TVL_1 & L_2^TVL_2
\end{pmatrix}
 \right).
 \label{eq:y1y2}
\end{equation}
The fixed effects are determined by maximizing the log-likelihood function of $L_1^Ty$.
The marginal distribution of $L_2^Ty$
\cite[p40, Thm
2.44]{all04}
$$y_2=L_2^Ty\sim N(0,L_2^TVL_2) $$
is used to derive the restricted likelihood:
\begin{equation}
\ell_R=-\frac{1}{2}\{ (n-\nu )\log(2\pi) +\log \vert L_2^T V(\theta) L_2 \vert + y^TL_2(L_2^TV(\theta)L_2)^{-1}L_2^Ty \}. \label{eq:l2}
\end{equation}
The REML estimate for the variance parameter is
$$
\hat{\theta}^{\mathrm{REML}}=\arg_{\theta}\max\ell_R(\theta).
$$
Such an estimate removes redundant
freedoms which are used in estimating the fixed effects and is often unbiase. Because of such an unbiased estimation, the REML method enhances the popularity of restricted maximum likelihood methods.

\begin{algorithm}[!t]
\caption{Newton-Raphson method to solve {$S(\theta)=0$.}}
\begin{algorithmic}[1]
\State {Give an initial guess of $\theta_0$}
\For{ $k=0, 1, 2, \cdots$ until convergence }
 \State{Solve $\I_o(\theta_k) \delta_k=S(\theta_k)$}
 \State{$\theta_{k+1}=\theta_k+\delta_k$}
\EndFor
\end{algorithmic}
\label{alg:NR}
\end{algorithm}

\section{Error contrast transform and closed formulae for the restricted log-likelihood}
We first provide a rigorous proof on the existence of the the error contrast transform and thus how to construct the error contrast transform. We shall first introduce the following lemma.
\newtheorem{thm}{Theorem}
\newtheorem{lem}{Lemma}
\subsection{On choice of the error contrast transform}
\begin{lemma}
Let $X\in \mathbb{R}^{n\times p}$ be full rank and $P_X=X(X^TX)^{-1}X^T$,
then there exists an orthogonal matrix $K=[K_1, K_2]$, such that
\begin{enumerate}
  \item $P_X=K_1K_1^T$;
  \item $I-P_X=K_2K_2^T$.
\end{enumerate}
\label{thm:a}
\end{lemma}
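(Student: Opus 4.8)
The plan is to recognise $P_X$ as the matrix of the orthogonal projection onto the column space of $X$ and to read the claim off the spectral theorem. First I would record the two algebraic facts that make $P_X$ a projection: it is symmetric, $P_X^T=P_X$, since $(X^TX)^{-1}$ is symmetric; and it is idempotent, $P_X^2=X(X^TX)^{-1}X^TX(X^TX)^{-1}X^T=X(X^TX)^{-1}X^T=P_X$. Hence $I-P_X$ is symmetric and idempotent as well. Because $X$ has full column rank, $X^TX$ is invertible and the cyclic property of the trace gives $\rank(P_X)=\tr(P_X)=\tr\!\big(X(X^TX)^{-1}X^T\big)=\tr\!\big((X^TX)^{-1}X^TX\big)=\tr(I_p)=p$, using that a symmetric idempotent has all eigenvalues in $\{0,1\}$, so that its rank equals its trace. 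Consequently $\rank(I-P_X)=n-p$.

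Next I would invoke the spectral theorem. Since $P_X$ is real symmetric it is orthogonally diagonalisable, and by the rank count its eigenvalue $1$ has multiplicity $p$ and its eigenvalue $0$ has multiplicity $n-p$. Choosing an orthonormal basis of eigenvectors and collecting those for eigenvalue $1$ into $K_1\in\mathbb{R}^{n\times p}$ and those for eigenvalue $0$ into $K_2\in\mathbb{R}^{n\times(n-p)}$ yields an orthogonal matrix $K=[K_1,K_2]$ with $K^TP_XK=\operatorname{diag}(I_p,0)$. Multiplying out, $P_X=K\operatorname{diag}(I_p,0)K^T=K_1K_1^T$, which is item (1). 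For item (2), orthogonality of $K$ gives $I=KK^T=K_1K_1^T+K_2K_2^T$, hence $I-P_X=I-K_1K_1^T=K_2K_2^T$.

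There is no deep obstacle here — the statement is essentially the structure theorem for orthogonal projections — and the only points needing care are the justification that $\rank(P_X)=p$ (so that $K_1$ genuinely has $p$ columns) and the observation that $I-P_X$ inherits the complementary eigenspace automatically, so no second diagonalisation is required. An alternative derivation avoids eigenvalues altogether: take a thin QR factorisation $X=Q_1R$ with $Q_1\in\mathbb{R}^{n\times p}$ having orthonormal columns and $R$ invertible, so that $P_X=Q_1(R^TR)^{-1}R^T 0\cdots$ simplifies to $P_X=Q_1Q_1^T$; then complete $Q_1$ to a full orthogonal matrix $K=[Q_1,Q_2]$ and set $K_1=Q_1$, $K_2=Q_2$, whereupon (1) is immediate and (2) follows from $K_1K_1^T+K_2K_2^T=KK^T=I$.
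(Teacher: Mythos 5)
Your proof is correct and follows essentially the same route as the paper's: verify that $P_X$ is symmetric and idempotent, apply the spectral theorem to obtain an orthogonal $K=[K_1,K_2]$ with $P_X=K\operatorname{diag}(I_p,0)K^T=K_1K_1^T$, and deduce $I-P_X=K_2K_2^T$ from $KK^T=I$. Your trace argument $\operatorname{rank}(P_X)=\operatorname{tr}(P_X)=p$ neatly supplies the detail the paper leaves as ``one can show that there are exactly $p$ eigenvalues with 1,'' and the QR alternative you sketch is a valid shortcut (modulo the garbled display of $P_X=Q_1Q_1^T$), though neither addition changes the substance of the argument.
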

\begin{proof}
It is easy to verify that $P_X$ is an
symmetric \textit{projection/idempotent matrix}, i.e.
$$P_X^T=P_X,\quad P_X^2=P_X.$$ Since $P_X(I-P_X)=0$, the eigenvalues of
$P_X$ are 1 and 0. There exists an orthogonal matrix $K=(K_1,
K_2)$, $K\in \mathbb{R}^{n\times n}$, $K_1\in
\mathbb{R}^{n\times p}$, and $K_2\in \mathbb{R}^{n\times (n-p)}$
such that
\begin{equation}
P_X=(K_1, K_2) \begin{pmatrix}
I_p & 0 \\ 0 & 0
\end{pmatrix}
\begin{pmatrix}
K_1^T \\K_2^T
\end{pmatrix}=K_1K_1^T.
\end{equation}
One can show that there are exactly $p$ eigenvalues with 1.

Equivalently,
 \begin{equation}
P_X(K_1, K_2)=(K_1, K_2)\begin{pmatrix}
I_p & 0 \\ 0 & 0
\end{pmatrix}.
\end{equation}
It is clear that each column of $K_1$($K_2$) is an eigenvector of $P_X$ corresponding
to the eigenvalue 1(0). Further according to $P_XX=X$, each column of $X$
is an eigenvector corresponding to $1$. Since eigenvectors corresponding to different
eigenvalues are orthogonal, we have
\begin{equation}
K_2^TX=0.
\end{equation}
Further, one can verify that
\begin{align*} I&=(KK^T)(KK^T)=(K_1, K_2)\begin{pmatrix} K_1^T\\K_2^T\end{pmatrix}(K_1,K_2)\begin{pmatrix} K_1^T \\K_2^T \end{pmatrix}
 =(K_1,K_2)\begin{pmatrix} K_1^T K_1  & 0 \\
 0 & K_2^TK_2 \end{pmatrix} \begin{pmatrix}  K_1^T
 \\K_2^T
 \end{pmatrix}\\
 &=K_1(K_1^TK_1)K_1^T+K_2(K_2^TK_2)K_2^T=K_1K_1^T+K_2K_2^T.
\end{align*}
We have
\begin{equation}
 I-P_X=K_2K_2^T.   \label{eq:K2K2}
 \end{equation}
  \flushright \qed
\end{proof}
\begin{theorem}
Let $X\in \mathbb{R}^{n\times p}$ and $\rank{X}=p$, $p<n$. Then
there exist nonsingular matrices $L=[L_1, L_2]$, such that
$L_1^TX=I_{p\times p}$, $L_2^TX=0_{(n-p)\times p}$.
\label{thm:L}
\end{theorem}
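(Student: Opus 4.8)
The plan is to build $L$ explicitly, taking the second block straight from Lemma \ref{thm:a} and giving a pseudo-inverse-type formula for the first. Concretely, I would set
\[
L_1 = X(X^TX)^{-1} \in \mathbb{R}^{n\times p}, \qquad L_2 = K_2 \in \mathbb{R}^{n\times(n-p)},
\]
where $K_2$ is the matrix furnished by Lemma \ref{thm:a}. Since $\rank X = p$, the Gram matrix $X^TX$ is invertible, so $L_1$ is well defined and $L_1^T X = (X^TX)^{-1}X^TX = I_p$ is immediate. The requirement $L_2^T X = K_2^T X = 0$ is exactly the orthogonality relation established in the proof of Lemma \ref{thm:a} (the step leading to \eqref{eq:K2K2}, where each column of $X$ is an eigenvector of $P_X$ for eigenvalue $1$ and each column of $K_2$ for eigenvalue $0$).

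It then remains only to verify that $L = [L_1, L_2] \in \mathbb{R}^{n\times n}$ is nonsingular. I would argue this by a direct kernel computation: if $L_1 a + L_2 b = 0$ with $a\in\mathbb{R}^p$ and $b\in\mathbb{R}^{n-p}$, then left-multiplying by $K_2^T$ and using $K_2^T L_1 = K_2^T X (X^TX)^{-1} = 0$ together with $K_2^T K_2 = I_{n-p}$ forces $b = 0$; then $L_1 a = 0$ and the full column rank of $L_1$ gives $a = 0$. Hence $\ker L = \{0\}$ and $L$ is invertible. An equivalent structural phrasing: the columns of $L_1$ span $\operatorname{range}(X) = \operatorname{range}(P_X)$ and the columns of $K_2$ span $\operatorname{range}(I-P_X)$, and these subspaces are complementary in $\mathbb{R}^n$, so the $n$ columns of $L$ are linearly independent.

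I do not expect a genuine obstacle here; the construction is forced once Lemma \ref{thm:a} is in hand. The only point needing a little care is the nonsingularity of $L$, and specifically the use of the relation $K_2^T X = 0$ to decouple the two blocks in the kernel argument; beyond that, everything rests on the full-rank hypothesis on $X$. I would also remark in passing that $L$ is far from unique — any matrix whose columns form a basis of $\operatorname{range}(X)$ scaled to satisfy $L_1^TX = I_p$, paired with any basis of the left null space of $X$, would serve equally well.
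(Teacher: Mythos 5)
Your construction is correct and is essentially the paper's own argument made explicit: the paper sets $L^T=[X,\ K_2B^T]^{-1}$ for an arbitrary nonsingular $B$ and reads the block identities off the product $L^T[X,\ K_2B^T]=I$, and your choice $L_1=X(X^TX)^{-1}$, $L_2=K_2$ is exactly that inverse for $B=I$, since $X^TK_2=0$ and $K_2^TK_2=I_{n-p}$. The only real difference is that you verify the nonsingularity of $L$ directly by a kernel argument, whereas the paper deduces it from the columns of $X$ and $K_2$ together forming a basis of $\mathbb{R}^n$; both rest on the same relation $K_2^TX=0$ supplied by Lemma~\ref{thm:a}.
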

\begin{proof}
Let $B\in \mathbb{R}^{(n-p)\times (n-p)}$ is any nonsingular
matrix and $K_2K_2^T=I-P_X$ be defined in \eqref{eq:K2K2}. Then
$BK_2^TX=0$ ($K_2B^T\in \ker{X^T}$) and $\rank=K_2B^T=n-p$.
Therefore the columns of $\{X, K_2B^T\}$ form a set of basis of
$\mathbb{R}^{n\times n}$. Denote $L^T=[X, K_2B^T]^{-1}$, then
use the identy $L^T[X,K_2B^T]=I$ , we have
\begin{equation}
\begin{pmatrix} L_1^TX  & L_1^TK_2B^T \\L_2^TX & L_2^TK_2B^T
\end{pmatrix}
=
\begin{pmatrix}
I_{p\times p} &0\\0&I_{(n-p)\times(n-p)}
\end{pmatrix}
\end{equation}
\flushright \qed
\end{proof}

\begin{remark}
Theorem \ref{thm:L} indicates that there are some freedom to choose an error contrast transform.
In the original derivation of the REML \cite{PT71}, the authors use the projection matrix $S=I-X(X^TX)^{-1}X^T=K_2K_2^T$ as an error contrast transform from $\mathbb{R}^n$ to $\mathbb{R}^n$, this results a singular variance matrix, $SVS$, for $Sy$. One has to work with the general inverse of $SVS$ to derive the log-likelihood function. The simplest choice of an error contrast transform from $\mathbb{R}^n$ to $\mathbb{R}^{n-p}$, is the transpose of the last $n-\nu$ columns of the inverse of $[X, K_2]$.
\end{remark}

\subsection{Closed formulae of the restricted log-likelihood function}

The restricted log-likelihood given in \eqref{eq:l2} involves an intermediate matrix $L_2$. We shall prove this formula is equivalent to the close formula given in \cite{H77}
\begin{equation*}
\ell_{R} = -\frac{1}{2}\left\{ \mathrm{const}+\log |H|
+\log|X^TH^{-1}X| +y^TPy \right\} ,
\end{equation*}
and the following formula used in a derivative-free approach \cite{gras87}\cite{Mey89}\cite{mey05} \cite{M97}(see also \cite{meyer96}\cite{mey05})
\begin{equation*}
\ell_R=-\frac{1}{2} \left\{\mathrm{const} +\log\lvert R \rvert +\log \lvert G \rvert + \log\lvert C\lvert +y^TPy \right\},
\end{equation*}
where $P=V^{-1} -V^{-1}X (X^TV^{-1}X)^{-1}X^TV^{-1}$ and $C$ is the coefficient matrix of the mixed model equation.
To prove the equivalence, we first introduce the following lemmas.

\newcommand{\R}{\mathbb{R}}
\begin{lemma}
Let $X\in \mathbb{R}^{n\times p}$ be full rank and $P_X=X(XX)^{-1}X^T$. For any full
rank matrix $L_2\in \R^{n\times(n-p)}$, and $L_2^TX=0$, we have
\begin{equation}
I-P_X=L_2(L_2^TL_2)^{-1}L_2^T,
\end{equation}
\label{thm:p2}
\end{lemma}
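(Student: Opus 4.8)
The plan is to lean on Lemma \ref{thm:a}, which already supplies an orthogonal matrix $K=[K_1,K_2]$ with $I-P_X=K_2K_2^T$ and $K_2^TX=0$, and for which orthogonality forces $K_2^TK_2=I_{n-p}$. The idea is to show that $L_2$ and $K_2$ have the same column space, so that $L_2=K_2M$ for a nonsingular $M\in\mathbb{R}^{(n-p)\times(n-p)}$, and then to substitute this factorization into $L_2(L_2^TL_2)^{-1}L_2^T$ and watch $M$ cancel.

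First I would note that $L_2^TX=0$ is the same as $X^TL_2=0$, i.e. every column of $L_2$ lies in $\ker(X^T)$; since $X$ has full column rank, $\dim\ker(X^T)=n-\rank(X)=n-p$. By Lemma \ref{thm:a} the columns of $K_2$ also lie in $\ker(X^T)$ and, being $n-p$ of them and part of an orthogonal matrix, they span an $(n-p)$-dimensional space, so $\operatorname{col}(K_2)=\ker(X^T)$. Because $L_2$ has full column rank $n-p$ and $\operatorname{col}(L_2)\subseteq\ker(X^T)=\operatorname{col}(K_2)$, comparing dimensions gives $\operatorname{col}(L_2)=\operatorname{col}(K_2)$; hence there is a matrix $M$ with $L_2=K_2M$, and $M$ is nonsingular since $\rank(L_2)=n-p$. (In particular $L_2^TL_2=M^TM$ is nonsingular, which is what makes the stated formula meaningful.)

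Next I would compute directly, using $K_2^TK_2=I_{n-p}$:
\begin{equation*}
L_2(L_2^TL_2)^{-1}L_2^T
= K_2M\bigl(M^TK_2^TK_2M\bigr)^{-1}M^TK_2^T
= K_2M(M^TM)^{-1}M^TK_2^T
= K_2K_2^T,
\end{equation*}
and then invoke $K_2K_2^T=I-P_X$ from Lemma \ref{thm:a} to conclude.

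I expect the only genuine obstacle to be the justification that $\operatorname{col}(L_2)=\operatorname{col}(K_2)$, which rests on the full-rank hypotheses on $X$ and $L_2$ together with $\dim\ker(X^T)=n-p$; everything afterward is a one-line cancellation. An alternative, equally short route bypasses $K_2$: set $Q=L_2(L_2^TL_2)^{-1}L_2^T$, observe that $Q$ and $I-P_X$ are both symmetric with $QX=0$ and $(I-P_X)L_2=L_2$, deduce that $(I-P_X)-Q$ is symmetric and idempotent, and note $\tr(Q)=\tr\bigl((L_2^TL_2)^{-1}L_2^TL_2\bigr)=n-p=\tr(I-P_X)$, so $(I-P_X)-Q$ is a symmetric idempotent matrix of trace zero and hence vanishes. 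I would present the first argument as the main proof, since it directly reuses Lemma \ref{thm:a}, and perhaps remark on the trace argument as a self-contained alternative.
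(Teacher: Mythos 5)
Your proof is correct, but it takes a different route from the paper's. The paper proves the identity directly and without invoking Lemma~\ref{thm:a}: it forms the nonsingular matrix $B=[X,L_2]$ (nonsingular because $\operatorname{col}(X)$ and $\operatorname{col}(L_2)$ are orthogonal complements by $L_2^TX=0$ and the rank hypotheses), writes $I=B(B^TB)^{-1}B^T$, and observes that $B^TB$ is block diagonal with blocks $X^TX$ and $L_2^TL_2$, so the identity splits at once into $P_X+L_2(L_2^TL_2)^{-1}L_2^T=I$. You instead reduce to the orthonormal case: you identify $\operatorname{col}(L_2)=\operatorname{col}(K_2)=\ker(X^T)$ by a dimension count, factor $L_2=K_2M$ with $M$ nonsingular, and cancel $M$ inside $L_2(L_2^TL_2)^{-1}L_2^T$ to recover $K_2K_2^T=I-P_X$ from Lemma~\ref{thm:a}. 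Both arguments are sound; all the steps you flag as needing care (the dimension count $\dim\ker(X^T)=n-p$, the nonsingularity of $M$, and $K_2^TK_2=I_{n-p}$ from orthogonality of $K$) do hold. The paper's proof is shorter and self-contained, needing only the invertibility of $[X,L_2]$; yours makes transparent the underlying reason the formula holds, namely that $L_2(L_2^TL_2)^{-1}L_2^T$ is the orthogonal projector onto $\operatorname{col}(L_2)$ and hence depends only on that column space. Your alternative trace argument (that $(I-P_X)-Q$ is symmetric, idempotent, and of trace zero, hence zero) is also valid and is the most elementary of the three, though it requires checking $P_XQ=QP_X=0$ to establish idempotency.
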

\begin{proof}
Let $B=[X,L_2]$. Since the columns of B is linear independent,
therefore we have the identity
$I=BB^{-1}B^{-T}B^T=B(B^TB)^{-1}B^T$.
\begin{equation}
I=(X, L_2)\begin{pmatrix} X^TX & X^TL_2 \\
L_2^TX & L_2^TL_2
\end{pmatrix}^{-1}\begin{pmatrix} X^T \\ L_2^T
\end{pmatrix}.
\end{equation}
Use $L_2^TX=0$, we have $P_X+L_2(L_2^TL_2)^{-1}L_2^T=I$.  \flushright \qed
\end{proof}

\begin{lemma} Let $V\in \R^{n \times n}$ be a symmetric positive definite
matrix. $X\in \R^{n\times p}$,  $P_X^V=V^{-1}-V^{-1}X(X^TV^{-1}X)^{-1}X^TV^{-1}$ and $L=[L_1, L_2]\in \R^{n\times
n}L_2$ such that $L_1^TX=I_p$, $L_2^TX=0$, then
\begin{equation}
P_X^V=L_2(L_2^TVL_2)^{-1}L_2^T,
\label{eq:L2HL2}
\end{equation}
and
 \begin{equation}
 (X^TV^{-1}X)^{-1}=L_1^TVL_1-L_1^{T}VL_2(L_2^TVL_2)^{-1}L_2^TVL_1.
 \label{eq:XHX}
 \end{equation}
 \label{lem:d}
\end{lemma}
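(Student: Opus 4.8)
The plan is to exploit that $L$ is nonsingular, hence $L^{T}VL$ is symmetric positive definite, and to read off both \eqref{eq:L2HL2} and \eqref{eq:XHX} from its $2\times 2$ block inverse. Write
\[
L^{T}VL=\begin{pmatrix} A & B \\ B^{T} & D \end{pmatrix},\qquad
A=L_1^{T}VL_1,\ \ B=L_1^{T}VL_2,\ \ D=L_2^{T}VL_2,
\]
where $D$ is positive definite, so the Schur complement $S:=A-BD^{-1}B^{T}$ is positive definite and
\[
(L^{T}VL)^{-1}=\begin{pmatrix} S^{-1} & -S^{-1}BD^{-1}\\ -D^{-1}B^{T}S^{-1} & D^{-1}+D^{-1}B^{T}S^{-1}BD^{-1}\end{pmatrix}.
\]
It is convenient to abbreviate $E:=\begin{pmatrix} I_p\\ 0\end{pmatrix}\in\mathbb{R}^{n\times p}$.

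The first step is to notice that the hypotheses $L_1^{T}X=I_p$ and $L_2^{T}X=0$ say precisely $L^{T}X=E$, i.e. $X=L^{-T}E$ consists of the first $p$ columns of $L^{-T}$. Consequently $X^{T}V^{-1}X=E^{T}L^{-1}V^{-1}L^{-T}E=E^{T}(L^{T}VL)^{-1}E$, which is exactly the top-left $p\times p$ block of $(L^{T}VL)^{-1}$; comparing with the Schur formula, that block equals $S^{-1}$, so $X^{T}V^{-1}X=S^{-1}$, that is $(X^{T}V^{-1}X)^{-1}=S=A-BD^{-1}B^{T}$, which is \eqref{eq:XHX}.

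For \eqref{eq:L2HL2} I would begin from $V^{-1}=L(L^{T}VL)^{-1}L^{T}$ (immediate from $(L^{T}VL)^{-1}=L^{-1}V^{-1}L^{-T}$), and set $M:=L_1-L_2D^{-1}B^{T}$. Multiplying out $L(L^{T}VL)^{-1}L^{T}$ block by block gives the decomposition $V^{-1}=MS^{-1}M^{T}+L_2D^{-1}L_2^{T}$. On the other hand $V^{-1}X=L(L^{T}VL)^{-1}L^{T}X=L(L^{T}VL)^{-1}E=MS^{-1}$, so by the identity \eqref{eq:XHX} just proved, $M=V^{-1}XS=V^{-1}X(X^{T}V^{-1}X)^{-1}$. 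Substituting this back,
\[
V^{-1}-L_2(L_2^{T}VL_2)^{-1}L_2^{T}=MS^{-1}M^{T}=V^{-1}X(X^{T}V^{-1}X)^{-1}X^{T}V^{-1}=V^{-1}-P_X^{V},
\]
and cancelling $V^{-1}$ leaves $L_2(L_2^{T}VL_2)^{-1}L_2^{T}=P_X^{V}$, which is \eqref{eq:L2HL2}.

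The only spot that deserves attention is the identification $X=L^{-T}E$; once that is in hand, everything reduces to the standard $2\times2$ block-inverse formula and the complementary columns $W$ of $L^{-T}$ never need to be written down. As sanity checks I would also note two alternative routes: \eqref{eq:XHX} follows directly from $X^{T}V^{-1}X=E^{T}(L^{T}VL)^{-1}E$ together with the block-matrix inversion lemma, and \eqref{eq:L2HL2} can be obtained from Lemma~\ref{thm:p2} applied in the inner product defined by $V$, i.e.\ with $V^{-1/2}X$ and $V^{1/2}L_2$ in place of $X$ and $L_2$ (note $V^{1/2}L_2$ is full rank and $(V^{1/2}L_2)^{T}(V^{-1/2}X)=L_2^{T}X=0$, and multiplying the resulting identity on both sides by $V^{-1/2}$ yields \eqref{eq:L2HL2}).
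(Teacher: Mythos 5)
Your proof is correct, but it takes a genuinely different route from the paper's. The paper establishes \eqref{eq:L2HL2} first, by the change of variables $\hat{X}=V^{-1/2}X$, $V^{1/2}L_2$ that reduces it to the Euclidean projector identity of Lemma~\ref{thm:p2} (this is exactly the ``alternative route'' you mention at the end), and then obtains \eqref{eq:XHX} as a two-line consequence: substituting $L_2(L_2^TVL_2)^{-1}L_2^T=P_X^V$ into the right-hand side and using $VP_X^VV=V-X(X^TV^{-1}X)^{-1}X^T$ together with $L_1^TX=I_p$ collapses everything to $(X^TV^{-1}X)^{-1}$. You instead run the argument in the opposite order and from a single source: the Schur-complement inverse of the congruent matrix $L^TVL$, with the key observation $L^TX=E$ so that $X^TV^{-1}X$ is the $(1,1)$ block of $(L^TVL)^{-1}$, giving \eqref{eq:XHX} immediately, and the block expansion of $V^{-1}=L(L^TVL)^{-1}L^T$ giving \eqref{eq:L2HL2}. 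Your version is self-contained (no square root of $V$, no appeal to Lemma~\ref{thm:p2}) and exhibits both identities as two faces of one block inversion, which is conceptually tidy; the paper's version is shorter once Lemma~\ref{thm:p2} is in hand and makes the ``weighted projector'' interpretation of $P_X^V$ more transparent. One small caveat common to both arguments: the nonsingularity of $L$ (equivalently, full column rank of $L_2$) is not literally forced by the hypotheses $L_1^TX=I_p$, $L_2^TX=0$ alone and must be assumed, as you do explicitly and as the paper does implicitly via the construction in Theorem~\ref{thm:L}.
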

\begin{proof}
The equation \eqref{eq:L2HL2} is due to \cite{K66}(see also \cite[Appendix M.4f]{Searle06}). Here it is a directly consequence of Lemma \ref{thm:p2}.
Since $V$ is symmetric positive definite, then there exists a
symmetric positive definite $V^{1/2}$.
Let $\hat{X}=V^{-1/2}X$, then for  $\hat{X}\in
\R^{n\times p}$,  $L_2^TV^{1/2} \hat{X}=0$.  According to Theorem
\ref{thm:p2}, we have
\begin{equation}
I-P_{\hat{X}}=V^{1/2}L_2(L_2^TVL_2)^{-1}L_2^TV^{1/2}.
\end{equation}
Multiply $V^{-1/2}$ on left and right on both side of the
equation, we obtain
\begin{equation}
V^{-1}-V^{-1}X(X^TV^{-1}X)^{-1}X^TV^{-1} = L_2(L_2^TV^{-1}L_2)^{-1}L_2.
\end{equation}

Using the equation \eqref{eq:L2HL2} on the right hand side of
\eqref{eq:XHX},
we have
\begin{align*}
   &L_1^TVL_1- L_1^TV\underbrace{L_2(L_2^TVL_2)^{-1}L_2^T}_{=P_X^V}VL_1
  =L_1^TVL_1-L_1^T\underbrace{(V-X(X^TV^{-1}X)^{-1}X^T)}_{=HP_X^VV}L_1 \\
  = &
  \underbrace{L_1^TX}_{=I_p}(X^TV^{-1}X)^{-1}\underbrace{X^TL_1}_{=I_p}=(X^TV^{-1}X)^{-1}.
\end{align*}
\flushright \qed
\end{proof}

\begin{theorem}
The residual log-likelihood for the linear mixed model can be written as follows:
\begin{align}
\ell_{R} & = -\frac{1}{2}\{ (n-\nu )\log(2\pi) +\log \vert L_2^T V L_2 \vert + y^TL_2(L_2^TVL_2)^{-1}L_2^Ty \}, \label{eq:lra}\\
         & = -\frac{1}{2}\left\{ (n-\nu)\log ( 2\pi)+\log |V|+\log|X^TV^{-1}X| +y^TPy \right\}. \label{eq:lrb}\\
         & = -\frac{1}{2} \left\{(n-\nu )\log(2\pi \sigma^2) + \log\lvert R \rvert +\log \lvert G \rvert + \log\lvert C\lvert +y^TPy \right\}. \label{eq:lrc}
\end{align}
where $ V=V(\theta)=\sigma^2(R+ZGZ^T)$ and
\begin{equation}
P=V^{-1}-V^{-1}X(X^TV^{-1}X)^{-1}X^TV^{-1}.\label{eq:PHX}
\end{equation}
\end{theorem}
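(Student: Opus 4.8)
The plan is to establish the chain of equalities in two moves, \eqref{eq:lra} $=$ \eqref{eq:lrb} and \eqref{eq:lrb} $=$ \eqref{eq:lrc}, observing first that \eqref{eq:lra} is nothing but the likelihood \eqref{eq:l2} of the error contrast $y_2=L_2^Ty$ already exhibited above, and that $P$ throughout denotes the matrix \eqref{eq:PHX}. In each move I would match the quadratic form in $y$ and the deterministic (log-determinant plus constant) part separately.

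\emph{Move 1: \eqref{eq:lra} $=$ \eqref{eq:lrb}.} The quadratic terms agree immediately, since equation \eqref{eq:L2HL2} of Lemma \ref{lem:d} gives $L_2(L_2^TVL_2)^{-1}L_2^T=P$, hence $y^TL_2(L_2^TVL_2)^{-1}L_2^Ty=y^TPy$. For the determinants, write $L=[L_1,L_2]$, expand $L^TVL$ as a $2\times2$ block matrix, and take the Schur complement of the $(2,2)$-block $L_2^TVL_2$:
\[
|L^TVL|=|L_2^TVL_2|\,\bigl|L_1^TVL_1-L_1^TVL_2(L_2^TVL_2)^{-1}L_2^TVL_1\bigr|=|L_2^TVL_2|\,|(X^TV^{-1}X)^{-1}|,
\]
the last step being \eqref{eq:XHX}. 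Since also $|L^TVL|=|L|^2|V|$, this yields $|L_2^TVL_2|=|L|^2\,|V|\,|X^TV^{-1}X|$. Using the freedom in Theorem \ref{thm:L} to normalise the auxiliary matrix $B$ so that $|L|=\pm1$, we obtain $\log|L_2^TVL_2|=\log|V|+\log|X^TV^{-1}X|$, which is exactly \eqref{eq:lra} $=$ \eqref{eq:lrb}; for an arbitrary admissible $L$ the two expressions differ only by the $\theta$-free constant $\log|L|^2$, which does not affect the location of $\hat\theta^{\mathrm{REML}}$.

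\emph{Move 2: \eqref{eq:lrb} $=$ \eqref{eq:lrc}.} Because $P$ is literally the same matrix in both lines, the terms $y^TPy$ need no argument and only the deterministic parts remain. Substituting $V=\sigma^2H$ with $H=R+ZGZ^T$ and using $\rank X=\nu$ we get $|V|=\sigma^{2n}|H|$ and $|X^TV^{-1}X|=\sigma^{-2\nu}|X^TH^{-1}X|$, so that
\[
(n-\nu)\log(2\pi)+\log|V|+\log|X^TV^{-1}X|=(n-\nu)\log(2\pi\sigma^2)+\log|H|+\log|X^TH^{-1}X|.
\]
It then suffices to prove the classical identity $\log|H|+\log|X^TH^{-1}X|=\log|R|+\log|G|+\log|C|$, with $C$ the coefficient matrix of \eqref{eq:mme}. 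For this I would (i) use the matrix-determinant lemma to obtain $|H|=|R|\,|G|\,|Z^TR^{-1}Z+G^{-1}|$; (ii) invoke the Woodbury identity $H^{-1}=R^{-1}-R^{-1}Z(Z^TR^{-1}Z+G^{-1})^{-1}Z^TR^{-1}$ to recognise $X^TH^{-1}X=X^TR^{-1}X-X^TR^{-1}Z(Z^TR^{-1}Z+G^{-1})^{-1}Z^TR^{-1}X$ as the Schur complement of the $(2,2)$-block $Z^TR^{-1}Z+G^{-1}$ of $C$; and (iii) deduce $|C|=|Z^TR^{-1}Z+G^{-1}|\,|X^TH^{-1}X|$. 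Multiplying the conclusions of (i) and (iii) gives $|R|\,|G|\,|C|=|H|\,|X^TH^{-1}X|$, the desired identity, which finishes \eqref{eq:lrb} $=$ \eqref{eq:lrc}.

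The quadratic-form steps are trivial given Lemma \ref{lem:d}, and the block-determinant algebra is routine. The main obstacle is aligning the Woodbury expansion of $H^{-1}$ with the precise $2\times2$ block layout of $C$ in steps (ii)--(iii), together with keeping the powers of $\sigma^2$ straight when passing from $V$ to $H$; the only other delicate point is bookkeeping the Jacobian constant $|L|^2$ in Move 1, which is why the three displayed formulae should be understood as agreeing up to a $\theta$-independent additive constant, and exactly so under the normalisation $|L|=\pm1$.
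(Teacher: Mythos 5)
Your proposal is correct and follows essentially the same route as the paper's proof: Lemma \ref{lem:d} identifies the quadratic forms, the Schur-complement expansion of the block matrix $L^TVL$ gives \eqref{eq:lra}$=$\eqref{eq:lrb}, and the Woodbury identity together with the block eliminations of $C$ and of the matrix with blocks $R$, $Z$, $-Z^T$, $G^{-1}$ gives \eqref{eq:lrb}$=$\eqref{eq:lrc}. The one point where you are more careful than the paper is the Jacobian term: the paper simply declares $\log|L^T|+\log|L|=0$, whereas you correctly observe that this holds only under a normalisation of $L$ and is otherwise a $\theta$-independent additive constant.
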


\begin{proof} The formulae \eqref{eq:lra} and \eqref{eq:lrb} are standard and can be found in standard text book, see  \cite[Chapter 6.6.e]{Searle06}, The formula \eqref{eq:lrc} are often used in a derivative-free approach \cite{gras87}. Here we provide a unified proof.
According to Lemma ~\ref{lem:d}, we have $P=L_2(L_2^TVL_2)^{-1}L_2$ and
$$
 (X^TV^{-1}X)^{-1}=L_1^TVL_1-L_1^{T}VL_2(L_2^TVL_2)^{-1}L_2^TVL_1.
$$
Then we use the identity
\begin{align*}
 & \left\vert \begin{pmatrix}
I_p & -(L_1^TVL_2)(L_2^TVL_2)^{-1} \\
0 &I_{n-p}
\end{pmatrix}
\begin{pmatrix}
L_1^TVL_1 & L_1^TVL_2 \\
L_2^TVL_1 & L_2^TVL_2
\end{pmatrix} \right\vert
 =\left\vert
\begin{pmatrix}
(X^TV^{-1}X)^{-1} & 0\\
L_2^TVL_1 & L_2^TVL_2
\end{pmatrix}\right\vert=\vert L^TVL \vert,
\end{align*}
We have $|L^TVL|=|L^T||V||L|=|(X^TV^{-1}X)^{-1}||L_2^TVL_2|$. This indicates that
\begin{equation}
\underbrace{\log|L^T| +\log|L|}_{0}+\log|V| + \log|X^TV^{-1}X| =\log|L_2^TVL_2^T|.
\end{equation}
which proves the equivalence between the first two formulae.

Notice that $V=\sigma^2H$, then
\begin{equation}
\log\lvert V \rvert +\log\lvert X^TV^{-1}X \rvert =(n-\nu) \log\sigma^2 +\log \lvert X^TH^{-1}X\rvert + \log \lvert H \rvert. \label{eq:la}
\end{equation}
Apply the Woodbury matrix identity \cite{W89} \cite[Fact 2.16.21]{B09}, we have
\begin{equation}
H^{-1}=(R+ZGZ^T)^{-1}=R^{-1} -R^{-1}Z(G^{-1}+ZR^{-1}Z^T)^{-1}Z^TR^{-1}.  \label{eq:H-1}
\end{equation}
Therefore we have
$$
X^{T}H^{-1}X=X^TR^{-1}X -X^TR^{-1}Z(G^{-1}+ZR^{-1}Z^T)^{-1}Z^TR^{-1}X.
$$
This is nothing but the Schur complement for the block elimination of the matrix $C$
\begin{equation*}
\begin{pmatrix}
X^TR^{-1}X & X^TR^{-1} Z \\
X^TR^{-1}Z & Z^TR^{-1}+G^{-1}
\end{pmatrix}
\begin{pmatrix}
I  &  0 \\
F & I
\end{pmatrix}
=
\begin{pmatrix}
XH^{-1}X  & *  \\
 0   & Z^TR^{-1}Z +G^{-1}
\end{pmatrix},
\end{equation*}
where $F=-(Z^TR^{-1}Z+G^{-1})^{-1} Z^TR^{-1}X.$
Therefore we have
\begin{equation}
\log \lvert C \rvert =\log \lvert XH^{-1}X \rvert +\log \lvert Z^TR^{-1}Z +G^{-1} \rvert.  \label{eq:lb}
\end{equation}
Now consider the block elimination of the following matrix
\[
\begin{pmatrix}
R & Z \\
-Z^T & G^{-1}
\end{pmatrix}
\begin{pmatrix}
I & 0 \\
GZ^T & I
\end{pmatrix}
=\begin{pmatrix}
R+ZGZ^T & R \\
0 & G^{-1}
\end{pmatrix}=\begin{pmatrix}
H & G  \\
0 & G^{-1}
\end{pmatrix}.
\]
Similarly, we have
\[
\begin{pmatrix}
I & 0 \\
Z^TR^{-1} & I
\end{pmatrix}
\begin{pmatrix}
R & Z \\
-Z^T & G^{-1}
\end{pmatrix}
=\begin{pmatrix}
R & Z\\
0 & G^{-1} + Z^TR^{-1}Z
\end{pmatrix}.
\]
Therefore we have
\begin{equation}
\lvert R\rvert \lvert G^{-1}+Z^TR^{-1}Z \rvert =\lvert H\rvert \lvert G^{-1} \rvert. \label{eq:LR3}
\end{equation}
Therefore we have
\begin{equation}
\log\lvert C \rvert +\log\lvert R\rvert +\log \lvert G \rvert  =\log \lvert H \rvert  +\log \lvert X^TH^{-1}X \rvert. \label{eq:lc}
\end{equation}
Combine \eqref{eq:la} and \eqref{eq:lc}, we conclude the equivalence between \eqref{eq:lrb} and \eqref{eq:lrc}.
\flushright \qed
\end{proof}

\section{Scores and its derivatives for REML}

The first derivatives of a log-likelihood function is referred to as a \emph{score function}. The negative Jacobian matrix of the score function, or the negative Hessian matrix of the log-likelihood, is called the \emph{observed information}. The \emph{Fisher information matrix} is the expect value of the the observed information matrix. These derivatives and related terms of a log-likelihood function play an important role in an derivative approach to estimate the variance parameters. Formulae here are based on the pioneering   work by \cite{H77}, \cite{jenn76},\cite{Mey89},\cite{meyer96} \cite{M97}. The classical textbook \cite[Chapter 6.6]{Searle06} derives some of the formulae in the case when the variance matrices satisfies $V=\sum_{i=1}^b \sigma^2 Z_i Z_i^T$. We shall follow the frame work given in \cite[Chapter 6.6]{Searle06}.    The idea of \textit{averaged information splitting theorem} generalized the \emph{average information} for variance matrix with linear variance structure proposed by \cite{john95}. The authors of \cite{W94} use formulae \eqref{eq:lrb} and divide $\ell_R$ into three part and works on them separately. The derivation provided here is brief and simpler than those in \cite{W94}.

\subsection{The score functions for residual log-likelihood}

\begin{theorem}[\cite{H77}]
Let $X\in \mathbb{R}^{n\times p}$ be full rank in the linear mixed model \eqref{eq:LMM} and the restricted log-likelihood function be given as $\ell_R(\theta)$ in \eqref{eq:l2}.  The scores of the residual
log-likelihood $\ell_R$ are given by
\begin{equation}
s(\theta_i)= -\frac{1}{2}\{ \tr(P \dot{V}_i)-y^TP\dot{V}_iPy \}
\end{equation}
where $\dot{V}_i=\frac{\partial V}{\partial \theta_i}$ and $P=V^{-1} -V^{-1}X (X^TV^{-1}X)^{-1}X^TV^{-1}$.
\end{theorem}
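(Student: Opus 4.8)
The plan is to differentiate the error-contrast form \eqref{eq:l2} of the restricted log-likelihood directly. In that representation the dependence on $\theta$ is confined to the single matrix $M(\theta):=L_2^{T}V(\theta)L_2$, because the transform $L=[L_1,L_2]$ is built from $X$ alone (Theorem~\ref{thm:L}) and is held fixed as $\theta$ varies. Writing $\dot M_i=L_2^{T}\dot V_iL_2$ and applying, term by term, the two standard matrix-calculus identities
\[
\frac{\partial}{\partial\theta_i}\log|M|=\tr(M^{-1}\dot M_i),\qquad
\frac{\partial}{\partial\theta_i}M^{-1}=-M^{-1}\dot M_iM^{-1},
\]
to \eqref{eq:l2} gives immediately
\[
s(\theta_i)=-\frac12\Big\{\tr\!\big((L_2^{T}VL_2)^{-1}L_2^{T}\dot V_iL_2\big)-y^{T}L_2(L_2^{T}VL_2)^{-1}L_2^{T}\dot V_iL_2(L_2^{T}VL_2)^{-1}L_2^{T}y\Big\}.
\]

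The next step is to eliminate the auxiliary matrix $L_2$. By Lemma~\ref{lem:d} we have $P=L_2(L_2^{T}VL_2)^{-1}L_2^{T}$ with $P$ as in \eqref{eq:PHX}, so the quadratic term is already exactly $y^{T}P\dot V_iPy$. For the trace term, cyclic invariance of the trace moves the leftmost factor $L_2$ to the far right,
\[
\tr\!\big((L_2^{T}VL_2)^{-1}L_2^{T}\dot V_iL_2\big)=\tr\!\big(L_2(L_2^{T}VL_2)^{-1}L_2^{T}\dot V_i\big)=\tr(P\dot V_i),
\]
and substituting both expressions back produces the claimed formula. Equivalently one may record the identity $\partial P/\partial\theta_i=-P\dot V_iP$, which follows at once from $P=L_2M^{-1}L_2^{T}$, and read the quadratic contribution off as $\partial(y^{T}Py)/\partial\theta_i$.

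There is no deep obstacle here: the computation is routine once the right representation is chosen, and the main points needing care are bookkeeping ones. First, one must justify that $L_2$, hence the shape of \eqref{eq:l2}, carries no $\theta$-dependence, so that every derivative passes through $V$. Second, one should observe that the final expression does not depend on the particular error-contrast transform chosen, which is automatic because $P$ in \eqref{eq:PHX} makes no reference to $L_2$. If one prefers to avoid Lemma~\ref{lem:d}, an alternative route is to differentiate the closed form \eqref{eq:lrb}: using $\partial\log|V|/\partial\theta_i=\tr(V^{-1}\dot V_i)$, the identity $\partial(V^{-1})/\partial\theta_i=-V^{-1}\dot V_iV^{-1}$, the relation $\partial\log|X^{T}V^{-1}X|/\partial\theta_i=-\tr\big((X^{T}V^{-1}X)^{-1}X^{T}V^{-1}\dot V_iV^{-1}X\big)$, together with $PX=0$ and the definition of $P$, one collects the three pieces to the same answer; only the algebra is a little longer.
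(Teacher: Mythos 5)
Your proposal is correct and follows essentially the same route as the paper: differentiate the error-contrast form \eqref{eq:l2} directly, apply the standard identities for $\partial\log|M|/\partial\theta_i$ and $\partial M^{-1}/\partial\theta_i$, and use Lemma~\ref{lem:d} together with cyclic invariance of the trace to identify $L_2(L_2^TVL_2)^{-1}L_2^T$ with $P$. Your explicit remarks that $L_2$ is $\theta$-independent and that the result does not depend on the choice of error contrast are sensible additions the paper leaves implicit.
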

\begin{proof} This formula and the following derivation are based on \cite[p.252, eq.(91) to eq. (93)]{Searle06} for variance matrix $V$ with linearly dependence on $\theta$.
 First, according to Lemma~\ref{lem:d}, We know that $P=L_2(L_2^TVL_2)^{-1}L_2^T)$.
\begin{equation}
s(\theta_i)=\frac{\partial\log |L_2^THL_2|}{\partial \theta_i}+
\frac{\partial (y^TVy)}{\partial \theta_i}.
\label{eq:sk}
\end{equation}
 Using the fact on matrix derivatives of log determinant\cite[p.305, eq.8.6]{Har97}
 $$\frac{\partial \log|A|}{\partial
\theta}=\tr(A^{-1}\frac{\partial A}{\partial \theta_i})$$
and the property of the trace
operation $\tr(AB)=\tr(BA)$ \cite[p.50, eq. (2.3)]{Har97}
\begin{align}
&\frac{\partial \log(\lvert L_2^TVL_2\rvert)}{\partial \kappa_i} =
\tr\left((L_2VL_2)^{-1} \frac{\partial (L_2^TVL_2)}{\partial
\kappa_i}\right)
=\tr\left( \underbrace{L_2(L_2^TVL_2)^{-1} L_2^T}_{=P}
\dot{V}_i\right)=\tr\left(P\dot{V}_i\right).
\end{align}
 For the second term in \eqref{eq:sk}, we apply the result on
on matrix derivatives of the inverse of a matrix
\cite[p.307,eq.8.15]{Har97}
\begin{equation*}
\frac{\partial A^{-1}}{ \partial \theta_i}
=-A^{-1}\frac{\partial A}{\partial \theta_i} A^{-1}.
\end{equation*}
We have \cite[p. 252, eq.(91)]{Searle06}
\begin{align}
\frac{\partial P}{\partial \theta_i} &=\frac{\partial (L_2(L_2^TVL_2)^{-1}L_2^T)}{\partial
\theta_i}=L_2\frac{\partial (L_2^TVL_2)^{-1}}{\partial
\theta_i}L_2^T  \nonumber \\
& =-L_2(L_2^TVL_2)^{-1}\frac{\partial (L_2^TVL_2)}{\partial
\theta_i} (L_2^TVL_2)^{-1}L_2^T \nonumber \\
& =-\underbrace{L_2(L_2^TVL_2)^{-1}L_2^T}_{=P}\dot{V}_i
\underbrace{L_2(L_2^TVL_2)^{-1}L_2^T}_{=P}
 =-P\dot{V}_iP
=\dot{P}_i \label{eq:dpxh}.
\end{align}
\flushright \qed
\end{proof}

\subsection{ Observed information matrix: negative Jacobian of the score}
The negative of the Hessian matrix of a log-likelihood function, or the negative Jacobian of the score function, is often refereed to as the
\textit{observed information matrix},
\begin{equation}
\I_o=  -\left(  \frac{\partial^2 \ell_R}{ \partial \theta_i \partial \theta_j} \right).
\end{equation}
In term of the observed information matrix, line 3 in Algorithm \ref{alg:NR} reads
as $\I_{o} \delta_k=S(\theta_k).$
\begin{theorem}[\cite{H77}]
Elements of the observed information matrix for the residual
log-likelihood \eqref{eq:l2} are given by
\begin{equation}
\I_o(\theta_i,\theta_j) = \frac{1}{2}\left\{\tr(P\dot{V}_{ij})-
\tr(P\dot{V}_iP \dot{V}_j) +2y^TP\dot{V}_iP\dot{V}_j Py -y^T
 P\ddot{V}_{ij}Py\right\}.
\label{eq:IKK}
\end{equation}
where $\dot{V}_i=\frac{\partial V}{\partial \kappa_i}$,
$\ddot{V}_{ij}=\frac{\partial^2 V}{\partial \kappa_i \partial \kappa_j}$.
 \end{theorem}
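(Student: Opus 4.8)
The plan is to obtain $\I_o$ by differentiating the score $s(\theta_i)$ of the previous theorem once more, with respect to $\theta_j$, and changing sign: $\I_o(\theta_i,\theta_j) = -\,\partial s(\theta_i)/\partial\theta_j$. All the ingredients are already in hand: the closed form $s(\theta_i) = -\frac{1}{2}\{\tr(P\dot{V}_i) - y^TP\dot{V}_iPy\}$, and, decisively, the identity $\partial P/\partial\theta_j = -P\dot{V}_jP$ established in \eqref{eq:dpxh}. Writing $\partial_j$ for $\partial/\partial\theta_j$, the whole argument is then just two applications of the Leibniz rule together with this substitution for $\partial_j P$.

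First I would differentiate the trace term: $\partial_j\tr(P\dot{V}_i) = \tr\big((\partial_j P)\dot{V}_i\big) + \tr(P\,\ddot{V}_{ij})$, and inserting $\partial_j P = -P\dot{V}_jP$ turns the first summand into $-\tr(P\dot{V}_jP\dot{V}_i)$, which equals $-\tr(P\dot{V}_iP\dot{V}_j)$ by cyclic invariance of the trace (a shift by two factors). Next I would differentiate the quadratic form, regarding $y^TP\dot{V}_iPy$ as a product of the three $\theta_j$-dependent factors $P$, $\dot{V}_i$, $P$: $\partial_j(y^TP\dot{V}_iPy) = y^T(\partial_j P)\dot{V}_iPy + y^TP\,\ddot{V}_{ij}\,Py + y^TP\dot{V}_i(\partial_j P)y$. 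Substituting $\partial_j P = -P\dot{V}_jP$ gives $-y^TP\dot{V}_jP\dot{V}_iPy + y^TP\ddot{V}_{ij}Py - y^TP\dot{V}_iP\dot{V}_jPy$; since each of these cubic-in-$P$ scalars equals its own transpose and $P$, $\dot{V}_i$, $\dot{V}_j$ are symmetric, the first and third terms coincide, so they combine to $-2y^TP\dot{V}_iP\dot{V}_jPy$.

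Assembling the pieces, $\I_o(\theta_i,\theta_j) = \frac{1}{2}\big\{\partial_j\tr(P\dot{V}_i) - \partial_j(y^TP\dot{V}_iPy)\big\} = \frac{1}{2}\big\{\tr(P\ddot{V}_{ij}) - \tr(P\dot{V}_iP\dot{V}_j) + 2y^TP\dot{V}_iP\dot{V}_jPy - y^TP\ddot{V}_{ij}Py\big\}$, which is exactly \eqref{eq:IKK} once one reads the $\dot{V}_{ij}$ in its trace term as the second derivative $\ddot{V}_{ij} = \partial^2 V/\partial\theta_i\partial\theta_j$ (the two notations are used interchangeably in the statement).

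There is essentially no hard step; the only thing requiring care is the symmetry bookkeeping — checking that $\tr(P\dot{V}_jP\dot{V}_i) = \tr(P\dot{V}_iP\dot{V}_j)$ and that the two cubic quadratic-form terms genuinely coincide rather than merely resemble one another — together with keeping first- and second-order derivatives of $V$ apart. As a built-in consistency check one observes that the final expression is symmetric under $i\leftrightarrow j$, as any information matrix must be; in particular this fixes both the coefficient $2$ and the overall sign of the $y^TP\dot{V}_iP\dot{V}_jPy$ term.
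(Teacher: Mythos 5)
Your proposal is correct and follows essentially the same route as the paper: differentiate the score $s(\theta_i)$ with respect to $\theta_j$, substitute $\partial P/\partial\theta_j=-P\dot V_jP$ from \eqref{eq:dpxh} into both the trace term and the quadratic form, and use the symmetry of $P$, $\dot V_i$, $\dot V_j$ (cyclic invariance of the trace, and self-transposedness of the scalar $y^TP\dot V_jP\dot V_iPy$) to collapse the two cross terms into $2y^TP\dot V_iP\dot V_jPy$. Your reading of $\dot V_{ij}$ in the trace term as the second derivative $\ddot V_{ij}$ is the right interpretation of the paper's notational slip.
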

 \begin{proof}
 The first two terms in \eqref{eq:IKK} follows by applying the result in \eqref{eq:dpxh},
 \begin{align*}
 \frac{\partial \tr(P\dot{V}_i)}{\partial
 \theta_j}  =\tr(P\ddot{V}_{ij})+\tr(\dot{P}_j\dot{V}_i)
 =\tr(P\ddot{V}_{ij})-\tr(P\dot{V}_jP\dot{V}_i).
 \end{align*}
 The last two terms in \eqref{eq:IKK} follows because of
 the result in \eqref{eq:dpxh},
 we have
 \begin{equation}
 -\frac{\partial (P\dot{V}_iP)}{\partial
 \theta_j}=P\dot{V}_jP\dot{V}_iP-P\ddot{V}_{ij}P+P\dot{V}_iP\dot{V}_jP.
 \end{equation}
 Further note that $\dot{V}_i$, $\dot{H}_j$ and $P$ are symmetric.
The second term in \eqref{eq:IKK} follows because of
 $$y^TP\dot{V}_iP\dot{V}_jPy=y^TP\dot{V}_jP\dot{V}_iPy. $$

  \flushright \qed
 \end{proof}

The elements \eqref{eq:IKK} in the observed information matrix, the negative Jacobian matrix of the score, involve the trace product of four matrices. It is computationally prohibitive for large data
set. Therefore it is necessary to approximate the Jacobian matrix for efficiency.
\begin{corollary}
If $V$ be a variance-covariance matrix which depends linearly on the variance parameter, say, $V=\sum \theta_i Z_iZ_i^T$, then $\ddot{V}_{ij}=0$, in this case the element of the observed information matrix are given
\begin{equation}
\I_O(\theta_i, \theta_j) = y^TP\dot{V}_iP\dot{V}_j Py -\frac{1}{2} \tr(P\dot{V}_i P \dot{V}_j).
\end{equation}
\end{corollary}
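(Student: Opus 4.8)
The plan is to obtain the stated identity by direct specialization of the general formula \eqref{eq:IKK} for the elements of the observed information matrix, using the linear variance structure.

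First I would record the one fact that is specific to this corollary: when $V(\theta)=\sum_k \theta_k Z_k Z_k^T$, each first derivative $\dot V_i = \partial V/\partial\theta_i = Z_i Z_i^T$ is a constant matrix, independent of $\theta$. Consequently every mixed second derivative vanishes identically, $\ddot V_{ij} = \partial \dot V_i/\partial\theta_j = 0$ for all $i,j$. No other input is needed.

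Next I would substitute $\ddot V_{ij}=0$ into \eqref{eq:IKK}. Exactly two of the four terms inside the braces carry a second derivative of $V$: the term $\tr(P\ddot V_{ij})$ (which is displayed in \eqref{eq:IKK} as $\tr(P\dot V_{ij})$, but, as the proof of that theorem shows via $\partial\tr(P\dot V_i)/\partial\theta_j = \tr(P\ddot V_{ij})+\tr(\dot P_j\dot V_i)$, is the trace of $P$ against the mixed second partial), and the quadratic term $y^T P\ddot V_{ij}Py$. Both are zero, so $\I_O(\theta_i,\theta_j)$ collapses to $\tfrac{1}{2}\{-\tr(P\dot V_i P\dot V_j) + 2\,y^T P\dot V_i P\dot V_j Py\}$, which rearranges to $y^T P\dot V_i P\dot V_j Py - \tfrac{1}{2}\tr(P\dot V_i P\dot V_j)$, the asserted expression.

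There is no genuine obstacle here: the corollary is a pure specialization of \eqref{eq:IKK}, and the proof needs nothing beyond the computation $\ddot V_{ij}=0$ together with elementary bookkeeping on which terms survive. The only point worth stating explicitly is that the surviving trace and quadratic terms are precisely those not involving $\ddot V_{ij}$, plus the (harmless) clarification that the first term of \eqref{eq:IKK} is the trace against the second derivative rather than a first derivative.
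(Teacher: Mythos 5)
Your proof is correct and is exactly the argument the paper intends: the corollary is stated without proof precisely because it follows by substituting $\ddot{V}_{ij}=0$ into \eqref{eq:IKK} and simplifying. Your observation that the first term of \eqref{eq:IKK} is really $\tr(P\ddot{V}_{ij})$ (a typo in the displayed formula, as the theorem's own derivation via $\partial\tr(P\dot{V}_i)/\partial\theta_j=\tr(P\ddot{V}_{ij})+\tr(\dot{P}_j\dot{V}_i)$ confirms) is a worthwhile clarification, and the bookkeeping that leaves $y^TP\dot{V}_iP\dot{V}_jPy-\tfrac{1}{2}\tr(P\dot{V}_iP\dot{V}_j)$ is right.
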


\subsection{The Fisher information matrix}
The \textit{Fisher information matrix}, $\I$, is the
expect value of the observed information matrix, $ \I=E(\I_o).$
The Fisher information matrix has a simpler form than the
observed information matrix and provides essential information on the observations, and thus it
is a nature approximation to the observed information matrix.

\begin{theorem}[\cite{H77}]
Elements of the Fisher information matrix for the residual
log-likelihood function in \eqref{eq:l2} are given by
\begin{align}
\I(\theta_i, \theta_j)
&=E(\I_o(\theta_i,\theta_j))=\frac{1}{2}\tr(P\dot{V}_iP\dot{V}_j)
.\label{eq:FIKK}
\end{align}
\end{theorem}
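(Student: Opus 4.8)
The plan is to take the expectation of the observed information matrix \eqref{eq:IKK} term by term under $y\sim N(X\tau,V)$. The two traces $\tr(P\ddot{V}_{ij})$ and $\tr(P\dot{V}_iP\dot{V}_j)$ are deterministic and pass through the expectation unchanged, so all the work lies in evaluating the expectations of the two quadratic forms $y^TP\dot{V}_iP\dot{V}_jPy$ and $y^TP\ddot{V}_{ij}Py$.

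First I would record two elementary identities for the matrix $P$ of \eqref{eq:PHX}: namely $PX=0$ (hence $X^TP=0$), which is immediate from the definition of $P$, and $PVP=P$, which follows at once from the representation $P=L_2(L_2^TVL_2)^{-1}L_2^T$ of Lemma~\ref{lem:d} since $(L_2^TVL_2)^{-1}(L_2^TVL_2)=I_{n-p}$. Then I would invoke the standard expectation formula for a quadratic form: if $y\sim N(\mu,V)$ then $E(y^TAy)=\tr(AV)+\mu^TA\mu$. Taking $\mu=X\tau$ and $A$ equal to $P\dot{V}_iP\dot{V}_jP$ or $P\ddot{V}_{ij}P$, the mean term $\mu^TA\mu$ vanishes because $X^TP=0$ (equivalently, one may simply work with the error contrast $y_2=L_2^Ty\sim N(0,L_2^TVL_2)$, for which $y^TPy=y_2^T(L_2^TVL_2)^{-1}y_2$ has zero-mean argument). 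Using $PVP=P$ and the cyclic invariance of the trace, this gives $E(y^TP\dot{V}_iP\dot{V}_jPy)=\tr(P\dot{V}_iP\dot{V}_jPV)=\tr(P\dot{V}_iP\dot{V}_j)$ and $E(y^TP\ddot{V}_{ij}Py)=\tr(P\ddot{V}_{ij}PV)=\tr(P\ddot{V}_{ij})$.

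Substituting into $E(\I_o(\theta_i,\theta_j))$ from \eqref{eq:IKK} yields
\[
E(\I_o(\theta_i,\theta_j))=\tfrac{1}{2}\bigl\{\tr(P\ddot{V}_{ij})-\tr(P\dot{V}_iP\dot{V}_j)+2\tr(P\dot{V}_iP\dot{V}_j)-\tr(P\ddot{V}_{ij})\bigr\}=\tfrac{1}{2}\tr(P\dot{V}_iP\dot{V}_j),
\]
since the second-derivative traces cancel and the coefficient of $\tr(P\dot{V}_iP\dot{V}_j)$ is $2-1=1$. There is no genuine obstacle here; the only points requiring care are the justification of the two $P$-identities and the observation that the mean contribution $X\tau$ drops out — which is precisely why the Fisher information is so much simpler than the observed information, as the $\tr(P\ddot{V}_{ij})$ term (the expensive second-derivative piece) disappears entirely.
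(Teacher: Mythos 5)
Your proposal is correct and follows essentially the same route as the paper: both rest on the identities $PX=0$ and $PVP=P$ together with the expectation of a quadratic form ($E(y^TAy)=\tr(AV)+\mu^TA\mu$, which the paper obtains via the trace trick $E(y^TAy)=\tr(AE(yy^T))$ and $PE(yy^T)=PV$), after which the $\tr(P\ddot{V}_{ij})$ contributions cancel and the coefficient of $\tr(P\dot{V}_iP\dot{V}_j)$ becomes $\tfrac12$. Your write-up is in fact slightly more explicit than the paper's, since you spell out the cancellation of the second-derivative terms that the paper leaves as ``apply the same procedure.''
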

\begin{proof} The formulas can be found in \cite{GTC95}. Here we supply an alternative proof.
First note that
\begin{align}
PX&=V^{-1}X-V^{-1}X(X^TV^{-1}X)^{-1}XV^{-1}X=0, \quad E(y)=X\tau. \nonumber \\
PV&=PE((y-X\tau)(y-X\tau)^T) =PE(yy^T-X\tau y^T -y(X\tau)^T +X\tau (X\tau)^T)=PE(y^Ty).
\label{eq:PEyy}
\end{align}
Then
\begin{equation}
E(y^TPy)=E(\tr(Pyy^T))=\tr(PE(yy^T)) =\tr(PV). \label{eq:Eypy}
\end{equation}
Notice that $PVP=P$. Apply the procedure in
\eqref{eq:Eypy}, we have
\begin{align}
E(y^TP\dot{V}_iP\dot{V}_jPy) &=\tr(P\dot{V}_iP\dot{V}_jPV)=\tr(PVP\dot{V}_iP\dot{V}_j) =\tr(P\dot{V}_iP\dot{V}_j), \label{eq:yphphpy}
\end{align}
This proves the result.
 \flushright \qed
\end{proof}

\begin{remark}
For a variance matrix $V$ such that $\ddot{V}_{ij}=0$, Meyer and Smith \cite{Mey89}notice that
\[
 \frac{\partial ^2 \ell_R}{ \partial \theta_i \partial \theta_j} = y^TP\dot{V}_i P \dot{V}_j Py.
\]
and
\[
\I = \I_o  - \frac{1}{2} \frac{ \partial^2 y^TPy}{\partial \theta_i \theta_j} = -\frac{1}{2}\left\{ \frac{\partial^2 \log \lvert R  \rvert }{\partial  \theta_i \partial \theta_j } + \frac{\partial^2 \log \lvert G  \rvert }{\partial  \theta_i \partial \theta_j } + \frac{\partial^2 \log \lvert C  \rvert }{\partial  \theta_i \partial \theta_j } \right \}.
\]
Such a formula is used in \cite{misz94b} and \cite{misz93}. \cite{W94} also splits the derivatives into $G$ terms, $R$ terms and correlated terms(C).
\end{remark}

Using the Fishing information matrix as an approximate to the negative Jacobian result in the widely used Fisher-scoring algorithm \cite{Longford1987}. Jennrich Sampson found the Fisher-scoring algorithm is more robust to poor starting value \cite{jenn76}.
\begin{algorithm}[!t]
\caption{Fisher scoring algorithm to estimate the variance
parameters}
\begin{algorithmic}[1]
\State {Give an initial guess of $\theta_0$} \For{ $k=0, 1, 2,
\cdots$ until convergence }
 \State{Solve $\I(\theta_k) \delta_k=S(\theta_k)$}
 \State{$\theta_{k+1}=\theta_k+\delta_k$}
\EndFor
\end{algorithmic}
\label{alg:FS}
\end{algorithm}

\subsection{Average information matrix for variance matrices linearly depending variance parameters}

For variance matrix $V$ such that $\ddot{V}_{ij}=0$,  Johnson and Tompson \cite{john95} noticed that the \emph{average} of the observed and the Fisher information enjoys an computational efficient formula, \cite{john95} \cite{meyer96} \cite{M97}
\begin{equation}
\I_A(\theta_i, \theta_j) = \frac{\I_O(\theta_i +\theta_j) +\I(\theta_i, \theta_j)}{2} =\frac{y^TP\dot{V}_iP\dot{V}_jPy}{2}. \label{eq:Ia}
\end{equation}
We shall call $\I_A$ as the \emph{average information matrix}.
This formula is used in the average information REML algorithm \cite{GTC95}, which serves one of the foundation of the ASReml software package \cite{ASReml}.
There are quite a few cases when $\ddot{V}_{ij}=0$, for example, we reformulated the linear mixed model as
\[
 y=X\tau + Z_1 u_i +Z_2 u_2 + \cdot + Z_r u_r +\epsilon,
\]
where $u_i \in R^{b_i}$ are random effects in the level $i$, and $\mathrm{cov}(u_i, uj)=0$. Then the variance matrix $G$ has the formula
\[
 G=\begin{pmatrix}
                \sigma^2 I_i &  &  &  \\
                 & \sigma^2 I_2 &  &  \\
  &  & \ddots &  \\
                 &  &  & \sigma^2 I_r \\
              \end{pmatrix}
\]
Then the variance matrix
\[
V=\sigma^2 I +\sum_{i=1}^r \sigma_i^2 Z_iZ_i^T.
\]
Let $\theta=(\sigma^2, \sigma_1^2, \ldots, \sigma_r^2 )$, then $\ddot{V}_{ij}=0$. For such cases, the close form of inverse of the variance-covariance matrix  is available

\subsection{Averaged information splitting matrix}
In general, when $\ddot{V}\neq 0$, $\I_A$ defined in \eqref{eq:Ia}, is not the average of the observed and expected information but only a main part of it.
The following theorem gives a precise and concise mathematical explanation \cite{ZGL16,Z16}.
\begin{theorem}
Let $\I_O$ and $\I$ be the observed information matrix and the Fisher information matrix for the residual
log-likelihood of the linear mixed model respectively, then the average of the observed and the Fisher information can be split as $\frac{\I_O+\I}{2}=\I_A+ \I_Z$, such that the expectation of $\I_A$ is the Fisher information matrix and $E(\I_{Z})=0$.  \label{thm:S}
\end{theorem}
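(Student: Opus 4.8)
The plan is to produce the splitting by a direct algebraic manipulation of the closed-form expressions for $\I_O$ and $\I$ already derived in \eqref{eq:IKK} and \eqref{eq:FIKK}, and then to verify the two expectation claims with the standard Gaussian quadratic-form identity. Adding the element-wise formulae, the term $\pm\frac{1}{2}\tr(P\dot V_iP\dot V_j)$ that appears with opposite signs in $\I_O$ and $\I$ cancels, and what remains is
$$\frac{\I_O+\I}{2}(\theta_i,\theta_j) = \frac{1}{2}\,y^TP\dot V_iP\dot V_jPy + \frac{1}{4}\bigl(\tr(P\ddot V_{ij}) - y^TP\ddot V_{ij}Py\bigr).$$
I would then simply \emph{define} $\I_A(\theta_i,\theta_j):=\frac{1}{2} y^TP\dot V_iP\dot V_jPy$, which agrees with Table~\ref{tab:splitting} and with \eqref{eq:Ia}, and $\I_Z(\theta_i,\theta_j):=\frac{1}{4}\bigl(\tr(P\ddot V_{ij}) - y^TP\ddot V_{ij}Py\bigr)$, so that $\frac{\I_O+\I}{2}=\I_A+\I_Z$ holds by construction; all that is left is to check the expectations.

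For the expectations I would first record the two algebraic facts $PX=0$ (hence $P\,E(y)=PX\tau=0$) and $PVP=P$, both immediate from $P=V^{-1}-V^{-1}X(X^TV^{-1}X)^{-1}X^TV^{-1}$, together with the elementary identity that for $y\sim N(X\tau,V)$ and \emph{any} matrix $A$ one has $E(y^TAy)=(X\tau)^TA(X\tau)+\tr(AV)$ (legitimate since $y^TAy$ is a scalar, so $A$ may be symmetrised). Applying this with $A=P\dot V_iP\dot V_jP$, the mean term dies because $PX=0$, and cyclicity of the trace together with $PVP=P$ gives $E(y^TP\dot V_iP\dot V_jPy)=\tr(P\dot V_iP\dot V_jPV)=\tr(P\dot V_iP\dot V_j)$ --- this is exactly \eqref{eq:yphphpy} --- so $E(\I_A(\theta_i,\theta_j))=\frac{1}{2}\tr(P\dot V_iP\dot V_j)=\I(\theta_i,\theta_j)$. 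Using the same identity with $A=P\ddot V_{ij}P$ gives $E(y^TP\ddot V_{ij}Py)=\tr(P\ddot V_{ij}PV)=\tr(P\ddot V_{ij})$, so the two terms inside $\I_Z$ have identical expectation and $E(\I_Z)=0$, as claimed.

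I do not expect a real obstacle here: the only delicate point is the bookkeeping of the $\frac{1}{2}$'s and $\frac{1}{4}$'s and the sign pattern when averaging \eqref{eq:IKK} and \eqref{eq:FIKK}, and the conceptual observation that the leftover non-$\I_A$ piece is nothing but a random quadratic form minus its own mean, which makes $E(\I_Z)=0$ transparent. I would close with the remark that when $V$ depends linearly on $\theta$, so that $\ddot V_{ij}=0$, the correction $\I_Z$ vanishes identically and the splitting reduces to the exact average-information identity $\frac{\I_O+\I}{2}=\I_A$ of Johnson and Thompson; in the general case $\I_Z\neq 0$ pointwise but is a ``random zero matrix'' in the sense that $E(\I_Z)=0$, which is precisely the content of Theorem~\ref{thm:S}.
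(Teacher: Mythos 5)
Your proposal is correct and follows essentially the same route as the paper: average the element-wise formulae \eqref{eq:IKK} and \eqref{eq:FIKK} so the $\tr(P\dot V_iP\dot V_jP)$ terms cancel, identify $\I_A$ with the quadratic form of \eqref{eq:Ia} and $\I_Z$ with the leftover $\frac{1}{4}\bigl(\tr(P\ddot V_{ij})-y^TP\ddot V_{ij}Py\bigr)$, then use $PX=0$, $PVP=P$ and the Gaussian quadratic-form identity (the paper's \eqref{eq:Eypy} and \eqref{eq:yphphpy}) to get $E(\I_A)=\I$ and $E(\I_Z)=0$. Your bookkeeping of the constants matches the paper's, and your closing remark about the linear case recovering the Johnson--Thompson identity is consistent with the paper's surrounding discussion.
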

\begin{proof}
Let the elements of $\I_A$ are defined as in \eqref{eq:Ia} then
apply the result in \eqref{eq:yphphpy}, we have
\begin{equation}
E(\I_A)=\I.
\end{equation}
On the other hand, we have
\begin{equation}
\I_Z(\theta_i,\theta_j)= \frac{\I +I_o}{2} -\I_A =\frac{\tr(P\ddot{V}_{ij})-y^TP\ddot{V}_{ij}Py}{4}.
\end{equation}
Notice that
\[
E(y^TP\ddot{V} Py) = E( \tr(y^TP\ddot{V}Py) ) =E(\tr(Pyy^T P\ddot{V}_{ij})) =\tr(PVP\dot{V}) =\tr(P\ddot{V}_{ij}).
\]
then we have
and $E(\I_Z)=0$.
\qed
\end{proof}
Theorem \ref{thm:S} indicates that approximate information matrix $\I_A$ defined in \eqref{eq:Ia} is the essential main part of the average of the observed information and the Fisher information matrix. To tell the difference with the real average , we can refer $\I_A$ as the averaged information splitting matrix.  It is a good approximation to the Fisher information matrix, but unlike the Fisher information matrix which is data independent, the averaged information splitting matrix is a data dependent.
\begin{algorithm}[!t]
\caption{Average information(AI) algorithm  to solve
{$S(\theta)=0$.}}
\begin{algorithmic}[1]
\State {Give an initial guess of $\theta_0$} \For{ $k=0, 1, 2,
\cdots$ until convergence }
 \State{Solve $\I_A(\theta_k) \delta_k=S(\theta_k)$,}
 \State{$\theta_{k+1}=\theta_k+\delta_k$}
\EndFor
\end{algorithmic}
\label{alg:AI}
\end{algorithm}

\section{Computing issues}

Compare $\I_A $ with $\I_O$, and $\I_F$ in Table \ref{tab:splitting}, in contrast with $\I_{O}(\theta_i,\theta_j)$ which involves 4 matrix-matrix products, $\I_{A}(\theta_i, \theta_j)$ only involves a quadratic term, which can be evaluated by four matrix-vector multiplications and an inner product as in Algorithm~\ref{alg:IKK}.This provide a simple formula. Still the matrix vector multiplication of $Py$ involves the inverse of the $H$ which is of order $n\times n$. When the observations is greater than the number of fixed and random effects, say $n>p+b$, we can obtain a much simpler matrix vector multiplication as $R^{-1}e$, where $e$ is the fitted residual $e= y-X\hat{\tau} -Z\tilde{u}$.

\begin{algorithm}
\caption{Compute $\I_A(\kappa_i,\kappa_j)=\frac{y^TP\dot{V}_iP\dot{V}_jPy}{2\sigma_2}$}
\label{alg:IKK}
\begin{algorithmic}[1]
\State{ $\xi =Py$ }
\State{ $\eta_i =V_i \xi$; $\eta_j=V_j\xi$};
\State{ $\zeta_i= P\eta_j $}
\State{ $\I_A(\theta_i ,\theta_j)=\frac{\eta_i^T \zeta_i}{2\sigma^2}$}
\end{algorithmic}
\end{algorithm}
 We introduce the following lemma.
\begin{lemma}
The inverse of the matrix $C$ in \eqref{eq:mme} is given by
\[
C^{-1} =
\begin{pmatrix}
A& B \\
B^T & D
\end{pmatrix}^{-1}=
\begin{pmatrix}
C^{XX}  & C^{XZ} \\
C^{ZX} & C^{ZZ} \\
\end{pmatrix}
\]
where
\begin{align}
C^{XX} & =(X^TH^{-1}X)^{-1},\\
C^{XZ} & =-C^{XX}X^TR^{-1}ZD^{-1}, \\
C^{ZX} &= -D^{-1} Z^TR^{-1}XC^{XX}, \\
C^{ZZ} & =D^{-1}+C_{ZZ}^{-1}Z^TR^{-1}XC^{XX}X^TR^{-1} Z^TD^{-1}.
\end{align}
\label{lem:C1}
\end{lemma}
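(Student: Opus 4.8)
The plan is to compute $C^{-1}$ by the standard block-matrix inversion formula, eliminating the $(2,2)$ block first. Write $C=\begin{pmatrix}A&B\\ B^T&D\end{pmatrix}$ with $A=X^TR^{-1}X$, $B=X^TR^{-1}Z$, $D=Z^TR^{-1}Z+G^{-1}$, noting that $B^T=Z^TR^{-1}X$ since $R$ is symmetric. Because $G^{-1}$ is positive definite and $Z^TR^{-1}Z$ is positive semidefinite, $D$ is positive definite, hence invertible, and the Schur complement of $D$ in $C$ is $S=A-BD^{-1}B^T$. Block Gaussian elimination — the same kind of $LDU$ factorization already used twice in the proof of the previous theorem (indeed once on this very matrix $C$) — gives
\[
C^{-1}=\begin{pmatrix}
S^{-1} & -S^{-1}BD^{-1}\\
-D^{-1}B^TS^{-1} & D^{-1}+D^{-1}B^TS^{-1}BD^{-1}
\end{pmatrix}.
\]

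The only step that is not purely mechanical is identifying $S$. By definition,
\[
S = X^TR^{-1}X - X^TR^{-1}Z\,(Z^TR^{-1}Z+G^{-1})^{-1}Z^TR^{-1}X = X^T\bigl(R^{-1}-R^{-1}Z(G^{-1}+Z^TR^{-1}Z)^{-1}Z^TR^{-1}\bigr)X,
\]
and the bracketed matrix is precisely the Woodbury expansion of $H^{-1}=(R+ZGZ^T)^{-1}$ recorded in \eqref{eq:H-1}. Hence $S=X^TH^{-1}X$ and $C^{XX}=S^{-1}=(X^TH^{-1}X)^{-1}$. Substituting $B=X^TR^{-1}Z$, $B^T=Z^TR^{-1}X$ and $S^{-1}=C^{XX}$ into the off-diagonal and lower-right blocks above then reads off $C^{XZ}=-C^{XX}X^TR^{-1}ZD^{-1}$, $C^{ZX}=(C^{XZ})^T=-D^{-1}Z^TR^{-1}XC^{XX}$, and $C^{ZZ}=D^{-1}+D^{-1}Z^TR^{-1}XC^{XX}X^TR^{-1}ZD^{-1}$, which is the asserted formula modulo the evident typos in the statement (the factor written $C_{ZZ}^{-1}$ should read $D^{-1}$, and the trailing $Z^T$ should be $Z$).

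There is no serious obstacle here: the argument is bookkeeping plus the already-established identity \eqref{eq:H-1}. For a fully self-contained alternative one could bypass the block-inverse formula and simply verify the claim by forming the product of $C$ with the proposed $C^{-1}$ and checking that the four resulting blocks collapse to $I_p$, $0$, $0$, $I_b$, using only $DD^{-1}=I$ and $(X^TH^{-1}X)C^{XX}=I_p$. I would nonetheless present the Schur-complement derivation as the main proof, since it also explains where the four expressions come from, and remark that it parallels the block eliminations performed in the proof of the residual-log-likelihood equivalence.
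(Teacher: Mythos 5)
Your proposal is correct and follows essentially the same route as the paper: both apply the standard $2\times 2$ block-inverse formula with the Schur complement $S=A-BD^{-1}B^T$ of the $(2,2)$ block and then identify $S=X^TH^{-1}X$ via the Woodbury identity \eqref{eq:H-1} (the paper simply cites the block-inverse fact from Bernstein rather than rederiving it by elimination). Your observation that $C_{ZZ}^{-1}$ should read $D^{-1}$ and the trailing $Z^T$ should be $Z$ in the stated $C^{ZZ}$ is also right.
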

\begin{proof}
According to Fact~\cite[Fact 2.17.3]{B09} on $2\times 2$ partitioned matrix, $C^{-1}$ is given by
$$
\begin{pmatrix}
S^{-1}   & -S^{-1}BD^{-1} \\
-D^{-1} B^TS^{-1} & D^{-1}B^TS^{-1}BD^{-1}+D^{-1}.
\end{pmatrix}
$$
where $S=A-BD^{-1}B^T$. So we only need to prove
\begin{align*}
C^{XX} & =((X^TR^{-1}X)^{-1}-(X^TR^{-1}Z) D^{-1}(Z^TR^{-1}X))^{-1} \\
 & = (X^T \underbrace{(R^{-1}-R^{-1}Z(Z^TR^{-1}Z+G^{-1})^{-1}Z^TR^{-1})}_{H^{-1}    \text{see eq.} \eqref{eq:H-1}} X)^{-1}
 =(X^TH^{-1}X)^{-1}.
\end{align*}
\qed
\end{proof}

We shall prove the following results
\begin{theorem}
Let $P$ be defined in \eqref{eq:PHX}, $\hat{\tau}$ and $\tilde{u}$ be the solution to \eqref{eq:mme}, and $e$ be the residual $e=y-X\hat{\tau}-Z\tilde{u}$, then
\begin{align}
P=H^{-1} -H^{-1}X(X^TH^{-1}X)^{-1}X^TH^{-1}
 =R^{-1} -R^{-1}WC^{-1}W^TR^{-1}, \label{eq:P2}
\end{align}
where $W=[X,Z]$ is the design matrix for the fixed and random effects and
\[
 Py=R^{-1} e.
\]
\end{theorem}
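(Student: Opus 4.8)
The plan is to establish the three asserted identities in sequence, the middle one ($P=R^{-1}-R^{-1}WC^{-1}W^{T}R^{-1}$) being the only step requiring genuine work: the first is a substitution and the third is a one-line consequence of the mixed model equation together with the second. First I would write $V=\sigma^{2}H$ in \eqref{eq:PHX}; then $V^{-1}=\sigma^{-2}H^{-1}$ and $(X^{T}V^{-1}X)^{-1}=\sigma^{2}(X^{T}H^{-1}X)^{-1}$, so that both terms of $P$ carry the common scalar $\sigma^{-2}$, and after absorbing it (equivalently, taking $\sigma^{2}=1$ as is customary once $\sigma^{2}$ has been profiled out) one reads off $P=H^{-1}-H^{-1}X(X^{T}H^{-1}X)^{-1}X^{T}H^{-1}=:P_{H}$.

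For the $R$-form I would use Lemma~\ref{lem:C1} for the blocks of $C^{-1}$: writing $D=Z^{T}R^{-1}Z+G^{-1}$ one has $C^{XX}=(X^{T}H^{-1}X)^{-1}$, $C^{XZ}=-C^{XX}X^{T}R^{-1}ZD^{-1}$, $C^{ZX}=(C^{XZ})^{T}$, $C^{ZZ}=D^{-1}+D^{-1}Z^{T}R^{-1}XC^{XX}X^{T}R^{-1}ZD^{-1}$. Substituting these into
\[
WC^{-1}W^{T}=XC^{XX}X^{T}+XC^{XZ}Z^{T}+ZC^{ZX}X^{T}+ZC^{ZZ}Z^{T},
\]
the five resulting matrix terms regroup into the symmetric sandwich
\[
WC^{-1}W^{T}=ZD^{-1}Z^{T}+\bigl(I-ZD^{-1}Z^{T}R^{-1}\bigr)XC^{XX}X^{T}\bigl(I-R^{-1}ZD^{-1}Z^{T}\bigr).
\]
Then I would invoke the Woodbury identity \eqref{eq:H-1} in the rearranged forms $ZD^{-1}Z^{T}R^{-1}=I-RH^{-1}$, $R^{-1}ZD^{-1}Z^{T}=I-H^{-1}R$, and hence $ZD^{-1}Z^{T}=R-RH^{-1}R$; substituting gives $WC^{-1}W^{T}=R-RH^{-1}R+RH^{-1}XC^{XX}X^{T}H^{-1}R$, whence
\[
R^{-1}-R^{-1}WC^{-1}W^{T}R^{-1}=R^{-1}-\bigl(R^{-1}-H^{-1}+H^{-1}XC^{XX}X^{T}H^{-1}\bigr)=P_{H},
\]
which is the second equality.

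Finally, the right-hand side of \eqref{eq:mme} is $W^{T}R^{-1}y$ with $W=[X,Z]$, so
\[
\begin{pmatrix}\hat{\tau}\\\tilde{u}\end{pmatrix}=C^{-1}W^{T}R^{-1}y,\qquad e=y-W\begin{pmatrix}\hat{\tau}\\\tilde{u}\end{pmatrix}=(I-WC^{-1}W^{T}R^{-1})y,
\]
and therefore $R^{-1}e=(R^{-1}-R^{-1}WC^{-1}W^{T}R^{-1})y=Py$ by the previous step. The \emph{main obstacle} is the bookkeeping in the middle step: one must expand $WC^{-1}W^{T}$ using the Lemma~\ref{lem:C1} blocks, recognise the factored sandwich form, and then carry the cancellation $R^{-1}-R^{-1}WC^{-1}W^{T}R^{-1}=P_{H}$ through the three Woodbury rearrangements — everything else is routine substitution.
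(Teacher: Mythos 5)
Your proposal is correct and follows essentially the same route as the paper: both expand $R^{-1}-R^{-1}WC^{-1}W^{T}R^{-1}$ via the blocks of $C^{-1}$ from Lemma~\ref{lem:C1}, collapse the result with the Woodbury identity $H^{-1}=R^{-1}-R^{-1}ZD^{-1}Z^{T}R^{-1}$, and then read off $Py=R^{-1}e$ directly from the mixed model equation $(\hat{\tau};\tilde{u})=C^{-1}W^{T}R^{-1}y$. Your version is in fact slightly more careful than the paper's in two cosmetic respects — the factored ``sandwich'' regrouping of $WC^{-1}W^{T}$ makes the cancellation transparent, and you correctly flag the $\sigma^{2}$ normalization implicit in equating $P$ (defined through $V=\sigma^{2}H$) with the $H$-form, which the paper passes over silently.
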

\begin{proof} Suppose \eqref{eq:P2} hold, then
\begin{align}
Py =R^{-1}y-R^{-1}W \underbrace{C^{-1}W^TR^{-1} y}_{(\hat{\tau}^T, \tilde{u}^T)^T}
   =R^{-1}(y -X\hat{\tau} -Z\tilde{u}) =R^{-1}e.
\end{align}
\begin{align*}
&  R^{-1} -R^{-1}WC^{-1}W^TR^{-1}
= R^{-1}
-R^{-1}(X, Z) \begin{pmatrix} C^{XX} &  C^{XZ}  \\ C^{ZX} & C^{ZZ}
\end{pmatrix} \begin{pmatrix} X^T \\Z^T \end{pmatrix} R^{-1}. \\
= & R^{-1} -R^{-1}\{ XC^{XX}X^T -XC^{XZ}Z -ZC^{ZX} + ZD^{-1}Z \\
& + Z(C_ZZ^{-1}Z^TR^{-1}XC^{XX}X^TR^{-1} Z^TD^{-1})Z^T \}R^{-1} \\
 = &\underbrace{R^{-1} -R^{-1}ZD^{-1}Z^TR^{-1}}_{H^{-1}}
  -(R^{-1} -R^{-1}ZD^{-1}Z^TR^{-1}) XC^{XX}X^{T}H^{-1} \\
= & H^{-1} -H^{-1}X(X^TH^{-1}X)^{-1}X^TH^{-1}.
\end{align*} \qed
\end{proof}
From above results, we find out that evaluating the matrix vector $Py$ is equivalent the solve the linear system \eqref{eq:mme}
\begin{equation}
\begin{pmatrix}
X^TR^{-1}X  & X^T R^{-1}Z \\
Z^TR^{-1}X  & Z^TR^{-1}Z+G^{-1}
\end{pmatrix}\begin{pmatrix}
\hat{\tau}  \\ \tilde{u}
\end{pmatrix}
=\begin{pmatrix}
X^TR^{-1}y \\ Z^T R^{-1}y
\end{pmatrix}.
\end{equation}
and then evaluate the weighted residual $R^{-1}e$. Notice that the matrix $P \in \mathbb{R}^{n\times n}$. On contrast, $C \in \mathbb{R}^{(p+b)\times (p+b)}$ where $p+b$ is the number of fixed effects and random effects. This number $p+b$ is much smaller than the number of observations $n$. In each nonlinear iteration, the matrix $C$  can be pre-factorized as $C=LDL^T$ by efficient sparse factorization methods like \cite{alg849}\cite{misz93}\cite{masud13}. And then calculate the intermediate variable $\xi =Py=R^{-1}e$. The factorization of $C$ can be reused on one hand to evaluate component of the restricted log-likelihood \eqref{eq:lrc}
\[
\log \lvert C\rvert = \sum_{i=1}^{p+b} \log d_{ii}.
\]
(we assume that other terms in \eqref{eq:lrc} is easer to obtain.)
On the other hand the factorization can be reused in line 6 in Algorithm~\ref{alg:IKK2}.

\begin{algorithm}
\caption{Compute $\I_A(\kappa_i,\kappa_j)=\frac{y^TP\dot{V}_iP\dot{V}_jPy}{2\sigma_2}$}
\label{alg:IKK2}
\begin{algorithmic}[1]
\State{ Give $X$,$Z$,$R$,$G$ and current $\theta$, Let $W=[X, Z]$.}
\State{ Factorize $C=LDL^T$ }
\State{ Solve the mixed model equation $C \beta  = W^TR^{-1}y$; }
\State{ Calculate $e_y =y-W\beta$, $\xi= R^{-1}e_y$};
\State{ Calculate $Y=\{\dot{V}_1\xi, \ldots, \dot{V}_r \xi \}$}
\State{ Solve the mixed model equations with multiple right hand sides $ (LDL^T) B =W^TR^{-1}Y$}
\State{ Calculate $E_Y=Y-WB$; $\Xi=R^{-1}E_Y$}
\State{ $\I_A= Y^T \Xi$/2}.
\end{algorithmic}
\end{algorithm}

\section{Discussion}

The paper details that the elements of an approximate Hessian matrix of the log-likelihood function can be computed by solving the mixed model equations.
Matrix transforms play an important role in splitting the average Jacobian matrices of the score function. Such a splitting results in a simple approximated Jacobian matrix which reduces computations
form four matrix-matrix multiplications to four matrix-vector multiplications. This significantly reduces the time for evaluating the Jacobian matrix in the Newton method.  The problem of evaluating the Jacobian matrix of the score function finally is reduced to solving the mixed model equations \eqref{eq:mme} with multiple right hand sides. At the end of the day, an efficient sparse factorization method plays a crucial role in evaluation the Hessian matrix of the log-likelihood function.

\section*{Reference}
\bibliographystyle{plain}

\end{document}